\begin{document}

\title{From tree matching to sparse graph alignment}

\coltauthor{\Name{Luca Ganassali} \Email{luca.ganassali@inria.fr}\\
\addr INRIA, DI/ENS, PSL Research University, Paris, France.
\AND
\Name{Laurent Massouli\'e} \Email{laurent.massoulie@inria.fr}\\
\addr MSR-Inria Joint Centre, INRIA, DI/ENS, PSL Research University, Paris, France.
}

\maketitle

\begin{abstract}
In this paper we consider alignment of sparse graphs, for which we introduce the \textbf{\textit{Neighborhood Tree Matching Algorithm}} (NTMA). For correlated Erd\H{o}s-R{\'{e}}nyi random graphs, we prove that the algorithm returns -- in polynomial time -- a positive fraction of correctly matched vertices, and a vanishing fraction of mismatches. This result holds with average degree of the graphs in $O(1)$ and correlation parameter $s$ that can be bounded away from 1, conditions under which random graph alignment is particularly challenging. As a byproduct of the analysis we introduce a matching metric between trees and characterize it for several models of correlated random trees. These results may be of independent interest, yielding for instance efficient tests for determining whether two random trees are correlated or independent.
\end{abstract}

\begin{keywords}
graph alignment, tree matching, \ER random graphs
\end{keywords}

\section*{Introduction}

Graph alignment consists in finding an injective mapping (matching) $\cS\subset V(G_1)\times V(G_2)$ between the vertex sets of two graphs $G_1$ and $G_2$ such that, for any two matched pairs $(i,u)$, $(j,v)\in\cS$, then occurrence of edge $\lbrace i,j \rbrace$ in $G_1$ tends to correspond to occurrence of edge $\lbrace u,v \rbrace$ in $G_2$. When this correspondence is exact for all pairs $(i,u)$, $(j,v)$ of matches, then the subgraphs of $G_1$, $G_2$ induced by their nodes appearing in $\cS$ are isomorphic. In this sense, graph alignment is the search for approximate graph isomorphisms.
It has many applications, among which: social network de-anonymization (\cite{narayanan2008robust}, \cite{DBLP:conf/sp/NarayananS09}), analysis of protein-protein interaction graphs (\cite{DBLP:journals/bmcbi/KazemiHGM16}, \cite{feizi19}), natural language processing (\cite{DBLP:journals/tkdd/BayatiGSW13}), medical image processing (\cite{DBLP:conf/ipmi/LombaertSS13}).

A recent thread of research (\cite{DBLP:conf/kdd/PedarsaniG11}, \cite{DBLP:conf/sigmetrics/DaiCKG19} \cite{DBLP:conf/sigmetrics/CullinaK16}, \cite{DBLP:journals/corr/abs-1811-07821}, \cite{DBLP:journals/pomacs/CullinaKMP19}, \cite{DBLP:journals/corr/abs-1907-08880}, \cite{DBLP:journals/corr/abs-1907-08883}) has investigated the fundamental limits to feasibility of graph alignment in the context of a natural generative model of correlated graphs, namely the correlated \ER random graph model $ERC(n,p,s)$. Specifically, it consists of two random graphs $G_1$ and $G_2$ on node set $[n]=\lbrace 1, \ldots,n \rbrace$ obtained as follows. First generate two aligned graphs $G_1$, $G'_2$ with adjacency matrices $A_1$, $A'_2$ such that for each node pair $\lbrace i,j \rbrace$, one has
\begin{flalign*}
\dP\left(A_1(i,j)=A'_2(i,j)=1\right)&=ps,\\
\dP\left(A_1(i,j)=1,A'_2(i,j)=0\right)&=\dP\left(A_1(i,j)=0,A'_2(i,j)=1\right)=p(1-s),\\
\dP\left(A_1(i,j)=A'_2(i,j)=0\right)&=1-p(2-s),
\end{flalign*} and this independently over node pairs $\lbrace i,j \rbrace$. Graph $G_2$'s adjacency matrix is then $A_2=M_{\sigma} A'_2 \left(M_{\sigma} \right)^{\top}$, where $M_{\sigma}$ is the matrix associated to a permutation $\sigma$ chosen uniformly at random from $\cS_n$.

Researchers have strived to determine for which parameter values $(p,s)$, assuming $n\gg1$, one can recover the unknown permutation $\sigma$, and therefore the alignment $G'_2$ of $G_2$ with $G_1$. As in other high-dimensional inference tasks such as community detection, one expects such goal to be either poly-time achievable, achievable though not in poly-time, or impossible to achieve. The corresponding regions of parameter space are usually referred to as the ``easy'',  ``hard'' or ``Information-theoretically (IT) impossible'' phases for the problem considered.

\cite{DBLP:conf/sigmetrics/CullinaK16} have shown that it is possible to recover $\sigma$ if and only if $nps-\log(n) \underset{n \to \infty}{\longrightarrow} +\infty$ \comm{$s$ and not $1-s$}, thereby characterizing the ``IT-impossible'' phase. \cite{DBLP:journals/corr/abs-1811-07821} have proposed a polynomial-time 'degree profile matching' algorithm, and proven it to recover $\sigma$ under the conditions $np \ge \log^\alpha(n)$, $1-s\le \log^{-\beta}(n)$ for suitable constants $\alpha,\beta>0$, thereby identifying a subset of the ``easy'' phase. More recently, \cite{DBLP:journals/corr/abs-1907-08880}, \cite{DBLP:journals/corr/abs-1907-08883}  have proposed a spectral method, and proven it to recover $\sigma$ under the same conditions.\\

The result of \cite{DBLP:conf/sigmetrics/CullinaK16} shows that there is no hope of recovering $\sigma$, or in other words, of perfectly re-aligning $G_1$ and $G_2$, in the case of sparse graphs, that is graphs with average degree $np$ of order 1. 
Nevertheless, their result does not rule out the possibility of partially recovering the unknown permutation $\sigma$. For the applications mentioned earlier, it is at the same time natural to assume that the graphs involved are sparse, and potentially useful to recover only a fraction of the unknown matches $(i,\sigma(i))$. 

\subsection*{Objectives and main result}
This motivates the present work, whose goal is to  show that \textbf{\textit{partial alignment of sparse correlated graphs is feasible}}, and to introduce a polynomial-time algorithm for producing such partial alignments. Our main result is the proposal of the so-called \textbf{\textit{Neighborhood Tree Matching Algorithm}} (NTMA hereafter) together with the following
\begin{theoreme_principal}
\label{theoreme_principal}
Consider the correlated \ER model $ERC(n,p,s)$, where $p=\lambda/n$. For some $\lambda_0>1$, for all $\lambda\in(1,\lambda_0]$, there exists $s^*(\lambda)<1$ such that, provided $s\in(s^*(\lambda),1]$, the NTMA returns a matching $\cS$ verifying the following properties with high probability:
\begin{equation}
|\cS\cap \{(i,\sigma(i)),\;i\in [n]\}|=\Omega(n),\; |\cS\setminus\{(i,\sigma(i)),\;i\in [n]\}|=o(n).
\end{equation}
\end{theoreme_principal}
In words, our algorithm returns a set of node alignments which contains a negligible fraction of mismatches, and $\Omega(n)$ good matches. Our result covers values of $\lambda$ arbitrarily close to 1, and thus applies to very sparse graphs. For $\lambda<1$, \ER graphs in our correlated model have connected components of size at most logarithmic in $n$, so that there is no hope of recovering a positive fraction of correct matches. This result can be interpreted as follows. For partial graph alignment of sparse \ER correlated random graphs, there is an ``easy phase'' that includes the parameter range $\{(\lambda,s):\lambda\in(1,\lambda_0],\; s\in(s^*(\lambda),1]\}$.

\subsection*{Paper organization}
Description of the Neighborhood Tree Matching Algorithm and the proof strategy for establishing Theorem I are given in Section \ref{section_sparse_graph_alignment}. 
Our algorithm relies essentially on a tree matching operation. To pave the way for Section \ref{section_sparse_graph_alignment}, we therefore introduce in Section \ref{section_tree_matching} a notion of matching weight between trees that is key for our algorithm, and can be computed efficiently in a recursive manner. We further obtain probabilistic guarantees on the matching weights between random trees drawn according to some Galton-Watson branching processes. These are instrumental in the proof of Theorem I. However these may be of independent interest. Indeed we introduce in Section \ref{section_tree_matching} a natural hypothesis testing problem on pairs of random trees, for which we obtain a successful test based on computation of tree matching weights. 

\subsection*{Related work}
Most relevant to the present work are the papers \cite{DBLP:conf/kdd/PedarsaniG11}, \cite{DBLP:conf/sigmetrics/DaiCKG19} \cite{DBLP:conf/sigmetrics/CullinaK16}, \cite{DBLP:journals/corr/abs-1811-07821},
\cite{DBLP:journals/corr/abs-1907-08880}, \cite{DBLP:journals/corr/abs-1907-08883}  already mentioned, which also focus on graph alignment in the context of the correlated \ER model. The main differences between the present paper and these is our focus on sparse random graph models, with average degree $\lambda=O(1)$, our treatment of correlation coefficients $s$ bounded away below 1, and our aim of partial rather than full alignment. Article \cite{DBLP:journals/pomacs/CullinaKMP19} addresses a notion of partial alignment stronger than ours, and hence  requires conditions under which graphs are not sparse.
\cite{DBLP:journals/talg/MakarychevMS14} show that graph alignment is NP-hard to solve, even approximately. This justifies the search for custom algorithms in a variety of scenarios. The main methods proposed are:
\textbf{\textit{Percolation methods}} based on some initial {\em seeds}, i.e. matched node pairs provided a priori (\cite{DBLP:journals/bmcbi/KazemiHGM16}). We remark that the matchings returned by our algorithm could be used as seeds, and then processed e.g. using percolation matching to eventually obtain an improved matching.  \textbf{\textit{Spectral methods}} are considered in \cite{feizi19}, \cite{DBLP:journals/corr/abs-1907-08883}; \textbf{\textit{Degree profile matching}}  is introduced in \cite{DBLP:journals/corr/abs-1811-07821}; \textbf{\textit{ Quadratic programming}} \comm{changed semidefinite to quadratic} approaches are proposed in  \cite{DBLP:journals/pami/ZaslavskiyBV09}. \textbf{\textit{Message passing methods}} are introduced in \cite{DBLP:journals/tkdd/BayatiGSW13}. These are structurally similar to our neighborhood tree alignment approach, which is implemented in a recursive manner and can be seen as a message-passing method. Our algorithm is however different, and comes with novel theoretical guarantees.
Graph alignment is a special case of the quadratic assignment problem, reviewed in \cite{conf/dimacs/PardalosRW93}. 
Database alignment is an important variant of graph alignment, studied in   \cite{DBLP:conf/aistats/DaiCK19}.

\subsection*{Notations}
For a graph  $G$, denote by $V(G)$ its set of vertices, $E(G)$ its set non-oriented edges, and $\overrightarrow{E}(G):=\lbrace (i,j), \lbrace i,j \rbrace \in E(G) \rbrace$ its set of oriented edges. We use the notations $u \sim v$ if $\lbrace u,v \rbrace \in E(G)$ and $u \to v$ if $(u,v)\in \overrightarrow{E}(G)$. The usual graph distance in $G$ will be denoted $\delta_G$. For $v \in V(G)$, let $\mathcal{N}_G(v)$ denote the neighborhood of $v$ in $G$, and $\deg_{G}(v)$ its degree. For $d \ge 1$ we also define $\cB_{G}(v,d)$ the set of vertices at (graph) distance at most $d$ from $v$, and $\cS_{G}(v,d):=\cB_{G}(v,d) \setminus \cB_{G}(v,d-1)$ the set of vertices at distance $d$ from $v$. 
For a rooted tree $\cT$, we let $\rho(\cT)$ denote its root node. For any $i\in V(\cT)\setminus \{\rho(\cT)\}$, we let $\pi_{\cT}(i)$ denote the parent of node $i$ in $\cT$. For $d \ge 1$, we note $\cB_d(\cT)=\cB_{\cT}(\rho(\cT),d)$ and $\cL_d(\cT)=\cS_{\cT}(\rho(\cT),d)$.
We omit the dependencies in $G$ or $\cT$ of these notations when there is no ambiguity.

\section{\label{section_tree_matching} Tree matching}
\label{sec:2}
In this section, we introduce the matching weight between rooted trees and the related matching rate. We then establish bounds on the matching rate for specific models of random trees. We also give an application to a hypothesis testing problem on the independence between two trees. 

\subsection{Matching weight of two rooted trees}
\begin{defi}
For any $d\ge 0$, let $\cA_d$ denote the collection of rooted trees whose leaves are all of depth $d$. Given two rooted trees $\cT$ and $\cT'$, let $\cM_d(\cT,\cT')$ denote the collection of trees $t\in\cA_d$ such that there exist injective embeddings \comm{application to embedding, et rajout de injective (un des reviewers le faisait remarquer pour eviter des matching weights infinis)} $f: V(t)\to V(\cT)$, $f': V(t)\to V(\cT')$ that preserve the rooted tree structure, i.e. such that
$$
\begin{array}{lll}
&f(\rho(t))=\rho(\cT),&f'(\rho(t))=\rho(\cT'),\\
\forall i\in V(t)\setminus\{\rho(t)\},&f(\pi_t(i))=\pi_{\cT}(f(i)),& f'(\pi_t(i))=\pi_{\cT'}(f'(i)).
\end{array}
$$
The \textbf{matching weight of $\cT$ and $\cT'$ at distance $d$} is then defined as:
\begin{equation}
\cW_d(\cT,\cT'):=\sup_{t\in \cM_d(\cT,\cT')}\left| \cL_d(t) \right|,
\end{equation}
i.e. the size, measured in number of leaves, of  the largest tree in  $\cM_d(\cT,\cT')$.
\end{defi}
\begin{figure}[H]
\centering
\vspace{-0.5cm}
    \includegraphics[scale=1]{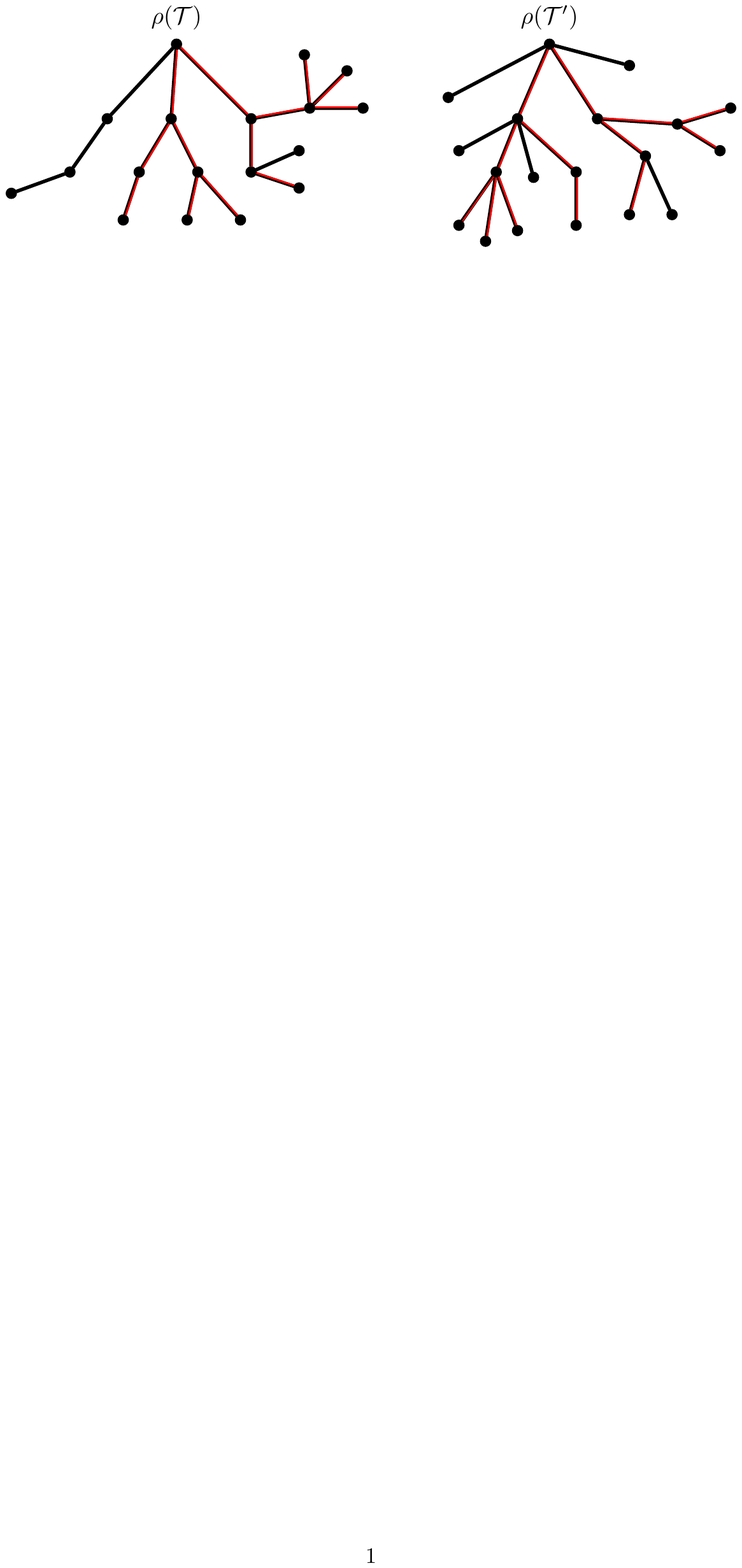}
\caption{\label{image_tree_matching_weight} Example of two trees $\cT$, $\cT'$ with $\cW_3(\cT,\cT')=7$, where an optimal $t \in \cA_3$ is drawn in red.}
\end{figure}
\subsection{Recursive computation of $\cW_d$}
We shall need the following
notations and definitions.
For a tree $\cT$, for $i \in V(\cT)$, and $d \geq 0$, $\cT_d^{(i)}$ is the sub-tree of $\cT$ re-rooted at $i$, containing all vertices at distance less than $d$ of $i$. For $i, j \in V(\cT)$ such that $j \to i$, $\cT_d^{(i \leftarrow j)}$ denotes the sub-tree of $\cT$ re-rooted at $i$, containing all vertices at distance less than $d$ of $i$ but where vertex $j$ has been removed. By definition two vertices  not connected by a path are at distance $\infty$. $\cT_d^{(i \leftarrow j)}$ is thus the tree of depth at most $d$ reached by oriented edge $j \to i$.
\begin{figure}[H]
    \centering
    \includegraphics[scale=1]{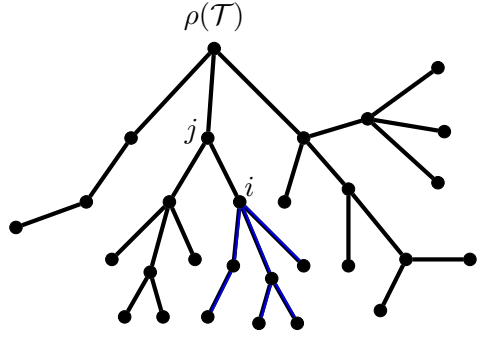}
\caption{\label{image_rerooted_trees} We here show an example of a tree $\cT$ and its corresponding $\cT_2^{(i \leftarrow j)}$ highlighted in blue.}
\end{figure}

\begin{defi}
For a given pair of trees $\cT$ and $\cT'$, for all pair of vertices $(i,u) \in V(\cT) \times V(\cT')$, the \textbf{matching weight of $(i,u)$ at depth $d$} is then defined as:
\begin{equation}
\cW_d(i,u):=\sup_{t\in \cM_d\left(\cT_d^{(i)},\cT_d^{\prime (u)}\right)}\left| \cL_d(t) \right|.
\end{equation}
Moreover, for all pairs of vertices $(i,u),(j,v) \in V(\cT) \times V(\cT')$ such that $j \to i$ and $v \to u$, the \textbf{matching weight of edges $j \to i$ and $v \to u$ at distance $d$} is then defined as:
\begin{equation}\label{eq:weight_cmp}
\cW_d(i \leftarrow j, u \leftarrow v):=\sup_{t\in \cM_d\left(\cT_d^{(i \leftarrow j)},\cT_d^{\prime (u \leftarrow v)}\right)}\left| \cL_d(t) \right|.
\end{equation}
\end{defi}

\begin{remarque}
This definition is compatible with the first one in the context of tree matching: one has $\cW_d(\rho(\cT),\rho(\cT')) = \cW_d(\cT,\cT')$. Note that $\cW_0(i,u)=1$ and $\cW_1(i,u)=\max \left( \deg(i), \deg(u)\right)$. Similarly, $\cW_0(i \leftarrow j, u \leftarrow v)=1$ and $\cW_1(i \leftarrow j, u \leftarrow v)=\max \left( \deg(i),\deg(u)\right)-1$.
\end{remarque}

Now fix $\cT$ and $\cT'$. From these definitions, for all $d \ge 1$, $(i,u),(j,v) \in V(\cT) \times V(\cT')$ such that $j \to i$ and $v \to u$, by doing a first step conditioning, we obtain the following recursion formulae:

\begin{equation}
    \label{rec_formula_W(i,u,j,v)}
    \cW_d(i \leftarrow j, u \leftarrow v) = \sup_{\mathfrak{m} \in \cM\left(\cN_\cT(i) \setminus \lbrace j \rbrace \, , \, \cN_{\cT'}(u) \setminus \lbrace v \rbrace \right)} \sum_{(k,w) \in \mathfrak{m}} \cW_{d-1}(k \leftarrow i, w \leftarrow u),
\end{equation} where $\cM\left(\cE,\cF\right)$ is the set of all maximal injective (or one-to-one) mappings $\mathfrak{m}: \cE_0 \subseteq \cE \to \cF$, where maximal means that they are not restrictions of another one-to-one mapping $\widetilde{\mathfrak{m}}: \cE_1  \to \cF$ with $\cE_1$ such that $\cE \supseteq \cE_1 \supset \cE_0$. In the same way we have 
\begin{equation}
    \label{rec_formula_W(i,u)}
    \cW_d(i,u) = \sup_{\mathfrak{m} \in \cM\left(\cN_\cT(i)\, , \, \cN_{\cT'}(u) \right)} \sum_{(k,w) \in \mathfrak{m}} \cW_{d-1}(k \leftarrow i, w \leftarrow u).
\end{equation}
Thus matching weights at depth $d$ can be obtained by computing weights at depth $d-1$ and solving a linear assignment problem (LAP). 
Recursive formulae (\ref{rec_formula_W(i,u,j,v)}) and (\ref{rec_formula_W(i,u)}) yield simple recursive algorithms (see Algorithms \ref{algo_W_d(i,u,j,v)} and \ref{algo_W_d(i,u)} in \ref{appendix_algos_W}) to  compute all matching weights at depth $d$.

\begin{remarque}\label{rem:complex1}
\comm{a la demande des rewiewers j'ai detaille les explications} The complexity of computing all matching weights at depth $d$ can be obtained as follows. We use dynamic programming and store the $\cW_{k}(i \leftarrow j, u \leftarrow v)$ in a array of size the number of pairs $(e,e’)$ where $e$ and $e’$ are two oriented edges in $\cT$,$\cT’$ (that is, $4 \times |\cT| \times |\cT'|$). Each time we increase $k$, we solve one LAP for each pair $(e,e’)$. The size of the small matrix on which the LAP is done does not exceed $d_\mathrm{max}$, the maximal degree in $\cT$ and $\cT'$. The Hungarian algorithm solves LAP with cubic time complexity. The time complexity is thus $O\left(d \times |\cT| \times |\cT'|\times d_\mathrm{max}^3\right)$. As  $d_\mathrm{max}=O(\log n )$ with high probability, the total complexity is $O(d n^2\log^3 n )$ where $n$ bounds the number of nodes in $\cT$ and $\cT'$. 

However for small values of $n$, the recursive  algorithms \ref{algo_W_d(i,u,j,v)} and \ref{algo_W_d(i,u)} are faster, although of  complexity $O \left( d_\mathrm{max}^{2d}\right)$, which is not polynomial for $d=\Omega(\log n)$.
\end{remarque}
\subsection{Matching rate \comm{change exponent to rate} of random trees}

\begin{defi}
Consider two random trees $\cT, \cT'$. Their \textbf{matching rate} is defined as
\begin{equation}
\label{matching_exponent_eq}
\gamma(\cT,\cT') := \inf \left\lbrace \gamma : \exists m,c,d_0>0, \forall x \geq 0, \forall d \ge d_0, \mathbb{P}\left(\cW_{d}(\mathcal{T},\mathcal{T}') \geq mx \gamma^d\right) \leq e^{-(x-c)_+}\right\rbrace .
\end{equation}
\end{defi}
This quantity captures the geometric rate of growth of matching weights with depth $d$. A simpler alternative definition could be $\widetilde{\gamma}(\cT,\cT') := \inf \left\lbrace \gamma : \mathbb{P}\left(\cW_{d}(\mathcal{T},\mathcal{T}') \geq \gamma^d\right) \underset{d \to \infty}{\longrightarrow} 0\right\rbrace$.  However definition (\ref{matching_exponent_eq}) better suits our purpose.

\begin{remarque}\label{rem:1.5}
By definition,  note that for any $\gamma>\gamma(\cT,\cT')$, $\mathbb{P}\left(\cW_{d}(\mathcal{T},\mathcal{T}') \geq \gamma^d\right)$ converges to $0$ very fast, like $O\left( \exp \left(-c(\gamma)^d\right) \right)$ with $c(\gamma)>1$, so that $\widetilde{\gamma}(\cT,\cT') \leq \gamma(\cT,\cT')$.
\end{remarque}

\subsection{Models of random trees}
We now describe three models of random trees that are relevant to sparse graph alignment. \comm{Un des rewiewers pointait qu'il fallait peut etre definir les modeles d'arbres en precisant les vertices, ou definir des embeddings. Finalement je trouve ça trop long à faire, les dessins donnant des apercus cohérents, et les stats qu'on fait dessus ne faisant jamais intervenir le nom precis des sommets consideres.}

\paragraph{$GW(\lambda)$:} We consider two independent Galton-Watson trees $\cT$ and $\cT'$ with offspring distribution $\Poi( \lambda)$, $\lambda>0$. We denote $(\cT,\cT') \sim GW(\lambda)$.

\paragraph{$GW(\lambda,s,\delta)$:} For $\delta \geq 1$, consider a labeled tree $\cT$ rooted at $\rho$ and a tree $\cT'$ rooted at $\rho'$. $\rho'$ is also a node of $\cT$, at distance $\delta$ from its root $\rho$. The two trees are generated as follows. First, nodes in $\cT$ on the path from $\rho$ to the parent of $\rho'$ in $\cT$ have, besides their child leading to $\rho'$, $\Poi(\lambda)$ children in $\cT$, themselves having offspring in $\cT$ given by independent Galton-Watson trees with offspring $\Poi( \lambda)$. Then, the intersection between $\cT$ and $\cT'$ is a Galton-Watson tree with offspring $\Poi( \lambda s)$, with $\lambda>0$ and $s \in [0,1]$. Then, to each node in the intersection tree, we attach children in $\cT \setminus \cT'$ and children in $\cT' \setminus \cT$, each number being independent $\Poi( \lambda (1-s))$ variables. These children in turn have offspring in the corresponding tree given by independent Galton-Watson trees with offspring $\Poi( \lambda)$. See figure \ref{image_GW} for an illustration. We denote $(\cT,\cT') \sim GW(\lambda,s,\delta)$.

\paragraph{$GW(\lambda,s)$:} It is the previous model with $\delta=0$, so that the two correlated trees $\cT$ and $\cT'$ have same root $\rho$. We denote $(\cT,\cT') \sim GW(\lambda,s)$.

\begin{figure}[H]
\centering
\includegraphics[scale=0.9]{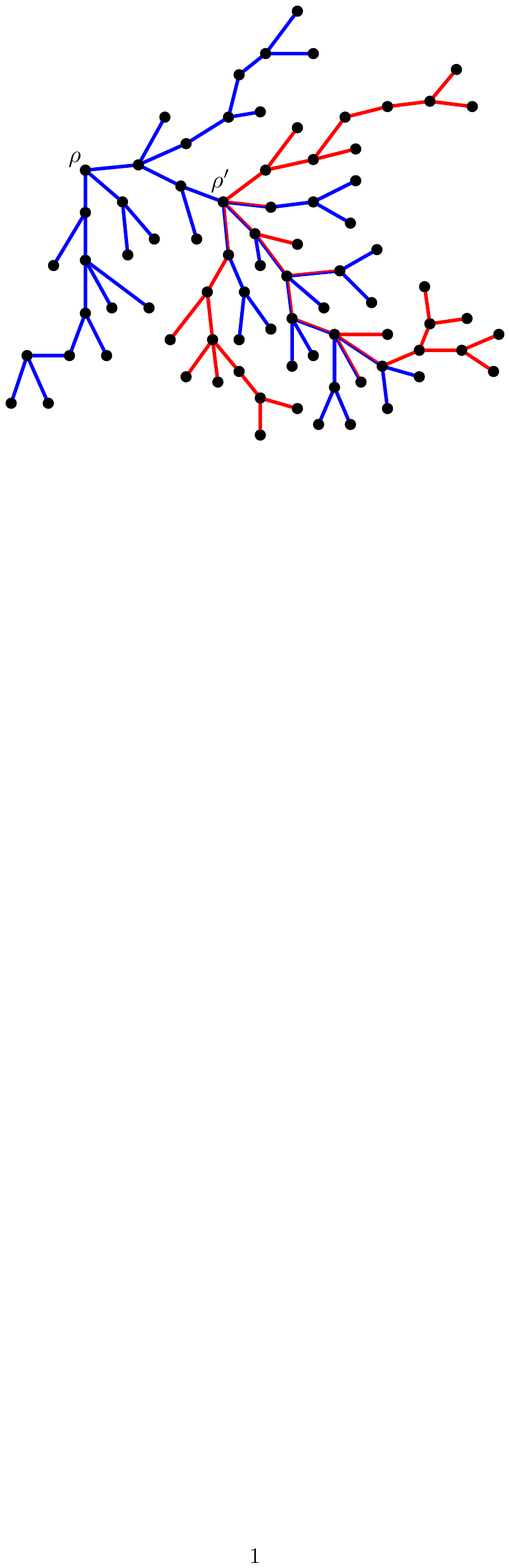}
\caption{\label{image_GW}Random trees $\cT$ (blue) and $\cT'$ (red) from model $GW(\lambda,s,\delta)$ with $\delta=3$.}
\end{figure}
We now turn to the analysis of matching rates for these models.

\begin{prop}
\label{prop_exponent_correlated}
Let $\lambda>1$ and $s \in [0,1]$ such that $\lambda s >1$. For $(\cT,\cT') \sim GW(\lambda,s)$, letting $\gamma(\lambda,s):=\gamma(\cT,\cT')$, we have:
\begin{equation*}
    \gamma(\lambda,s) \geq \lambda s.
\end{equation*}
\end{prop}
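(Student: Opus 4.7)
My plan is to exhibit a natural tree that embeds identically into both $\cT$ and $\cT'$ and whose leaf count grows like $(\lambda s)^d$, then to leverage Kesten--Stigum on this tree to match the definition of the matching rate.

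First, I would exploit the generative description of $GW(\lambda,s)$: the intersection tree $\tau := \cT \cap \cT'$ is itself a Galton--Watson tree with offspring distribution $\mathrm{Poi}(\lambda s)$, rooted at the common root $\rho$. Writing $Z_d := |\cL_d(\tau)|$ for its population at depth $d$, I define $\tilde\tau_d$ as the subtree of $\tau$ obtained by retaining only those vertices at depth $\le d$ that have at least one descendant at depth exactly $d$ in $\tau$. By construction $\tilde\tau_d \in \cA_d$ and $|\cL_d(\tilde\tau_d)| = Z_d$, while the natural inclusion $V(\tilde\tau_d) \hookrightarrow V(\tau) \subseteq V(\cT) \cap V(\cT')$ provides simultaneous injective root- and parent-preserving embeddings into $\cT$ and $\cT'$. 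Hence $\tilde\tau_d \in \cM_d(\cT,\cT')$ and
\begin{equation*}
\cW_d(\cT,\cT') \ \ge\ Z_d.
\end{equation*}

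Next, because $\lambda s > 1$ and the Poisson distribution has finite $x\log x$ moments, the Kesten--Stigum theorem applies to $\tau$: there exists a random variable $W \ge 0$ with $\mathbb{P}(W>0)=1-q>0$ (where $q<1$ is the extinction probability) such that $Z_d/(\lambda s)^d \to W$ almost surely and in $L^1$. In particular, for any $\epsilon>0$ one has $\mathbb{P}(Z_d/(\lambda s)^d \ge \epsilon) \to \mathbb{P}(W\ge \epsilon)$, and this limit can be made at least $(1-q)/2$ by choosing $\epsilon = \epsilon_0$ small enough.

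Finally, I would argue by contradiction: suppose $\gamma(\cT,\cT') < \lambda s$ and fix $\gamma \in (\gamma(\cT,\cT'),\lambda s)$. The definition of the matching rate produces constants $m,c,d_0>0$ such that $\mathbb{P}(\cW_d \ge m x \gamma^d) \le e^{-(x-c)_+}$ for every $x \ge 0$ and $d \ge d_0$. Choose $x_0$ so large that $e^{-(x_0-c)} < (1-q)/4$. Since $\gamma/(\lambda s) < 1$, for all $d$ sufficiently large we have $m x_0 (\gamma/(\lambda s))^d \le \epsilon_0$, so that
\begin{equation*}
\mathbb{P}\bigl(\cW_d \ge m x_0 \gamma^d\bigr) \ \ge\ \mathbb{P}\bigl(Z_d \ge m x_0 \gamma^d\bigr) \ \ge\ \mathbb{P}\!\left(\tfrac{Z_d}{(\lambda s)^d} \ge \epsilon_0\right) \ \xrightarrow[d\to\infty]{}\ \mathbb{P}(W\ge\epsilon_0) \ \ge\ \tfrac{1-q}{2},
\end{equation*}
contradicting the tail bound for $d$ large. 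The main obstacle I anticipate is verifying cleanly that the pruning step really yields $\tilde\tau_d \in \cA_d$ with identical injective embeddings on both sides (a combinatorial check that must be made uniform in the realisation); once that is secured, the probabilistic part is a routine consequence of Kesten--Stigum and the definition (\ref{matching_exponent_eq}).
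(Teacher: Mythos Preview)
Your proposal is correct and follows essentially the same approach as the paper: both exploit that the intersection tree $\cT\cap\cT'$ is Galton--Watson with offspring $\mathrm{Poi}(\lambda s)$, invoke the martingale limit $(\lambda s)^{-d}|\cL_d(\cT\cap\cT')|\to W$ with $\dP(W>0)=1-p>0$, and conclude that no $\gamma<\lambda s$ can satisfy the tail bound in (\ref{matching_exponent_eq}). Your write-up is simply more explicit about the pruning step and the contradiction against the definition, whereas the paper compresses this into the line ``the result follows.''
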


\begin{proof}
Let $\cT_{\cap}$ be the intersection tree between $\cT$ and $\cT'$. Branching process theory implies that $(\lambda s)^{-d}\big|\cL_d(\cT_{\cap})\big|$ converges almost surely to a random variable $Z$ as $d\to\infty$, such that $\dP\left(Z>0\right)=1-p$, with $p$ the extinction probability of the branching tree $\cT_{\cap} $. Since $p<1$ when $\lambda s>1$, and for every small enough $\eps>0$, $\lim_{d\to\infty}\dP\left(\cW_d(\cT,\cT')\ge \left(\lambda s(1-\eps)\right)^d\right) \geq 1-p$ \comm{changed the last $=$ into a $\geq$}, the result follows. 
\end{proof}

\subsection{Matching rate of independent Galton-Watson trees}
\begin{theoreme}\label{lambda_close_to_1}
Let $\cT$, $\cT'$ be two independent Galton-Watson trees from the model $GW(\lambda)$. Let $\gamma(\lambda):=\gamma(\cT,\cT')$. There exists $\lambda_0>1$ such that for all $\lambda \in (1,\lambda_0]$, we have
\begin{equation}
\gamma(\lambda)<\lambda.
\end{equation}
\end{theoreme}
Evaluations of $\gamma(\lambda)$ by simulations, confirming the Theorem,  are provided in Appendix \ref{app:tree}.

\paragraph{Proof outline} \comm{Mettre en valeur la notation $\cT_d$ qui nous suivra longtemps.}The full proof of Theorem \ref{lambda_close_to_1} is detailed in the appendix (\ref{appendix_proof_lambda_close_to_1}), but we here give the key steps. 

We introduce some notations. First, for a tree $t$, let $r_d(t)$ denote the tree obtained by suppressing nodes at depths greater than $d$, and then iteratively pruning leaves of depth strictly less than $d$. When computing $\cW_d(t,t')$, the only informative sub-trees are precisely $r_d(t)$ and in $r_d(t')$, one of these being empty if $t$ or $t'$ doesn't survive up to depth $d$. In the rest of the paper, we define $\cT_d$ the random variable $r_d(\cT)$ where $\cT$ is conditioned to survive up to depth $d$.

Consider $(\cT,\cT') \sim GW(\lambda)$. We let $\cE_d$ (respectively, $\cE'_d$) denote the event that tree $\cT$ (respectively, $\cT'$) becomes extinct before $d$ generations, i.e. $\cL_d(\cT)=\varnothing$ (respectively, $\cL_d(\cT')=\varnothing$). We let $p_{d}=\dP(\cE_d) = \dP(\cE'_d)$. It is well known that it satisfies the recursion
$$
p_{0}=0,\; p_{d}=e^{-\lambda(1-p_{d-1})}.
$$  We now establish the following lemma on the structure of $\cT_d$:
\begin{lemme}
\label{lemma_pruning_trees}
For any $\lambda>1$, $\cT_d$ can be constructed by first sampling the number of children $D$ of the root $\rho(\cT)$ according to distribution
\begin{equation*}
    \dP(D=k)=\mathbf{1}_{k>0}\frac{\dP(\Poi(\lambda(1-p_{d-1}))=k)}{\dP(\Poi(\lambda(1-p_{d-1}))>0)}=: q_{d,k},
\end{equation*}
and then attaching $D$ independent copies of $\cT_{d-1}$ to the $D$ children of $\rho(\cT)$.
\end{lemme}
Assume $\eps = \lambda - 1$ to be small enough \comm{added this first sentence}. Fix $r \in (0,1)$, let $\gamma=1+r\eps$. We first show using exponential moments that there exist $m,c>0$ and $d_0 >0$ such that for all $x>0$ $$\dP \left(\cW_{d_0}\left(\cT_{d_0},\cT'_{d_0}\right) \geq m x\right) \leq e^{-x+c}.$$
Then we define the random variables
$$X_d := \gamma^{-(d-d_0)} m^{-1} \cW_d\left(\cT_{d},\cT'_{d}\right).$$
Then, considering the number $D$ of children of the root in $\cT_{d}$ (resp. $D'$ in $\cT'_{d}$), using the previous lemma, one can establish, for all $x>0$, a recursive formula of the following form
\begin{equation*}
    \dP\left( X_d \geq x \right) \leq \sum_{k,\ell \geq 1} q_{d,k} q_{d,\ell} \dP \left(\exists \mathfrak{m} \in \cM\left([k],[\ell]\right), \; \sum_{(i,u) \in \mathfrak{m}} X_{d-1,i,u} \geq \gamma x \right),
\end{equation*} where the $X_{d-1,i,u}$ are i.i.d. copies of $X_{d-1}$. The union bound yields
\begin{equation*}
    \dP\left( X_d \geq x \right) \leq \sum_{k,\ell \geq 1} q_{d,k} q_{d,\ell} \min \left( 1, (k \vee \ell)^{\underline{k \wedge \ell}} \times \dP \left( \sum_{i=1}^{k \wedge \ell} X_{d-1,i,u} \geq \gamma x \right)\right),
\end{equation*} where $m^{\underline{p}}:= m(m-1)\ldots(m-p+1) = \frac{m!}{(m-p)!}$. This inequality enables, with a few more technical steps (see \ref{appendix_proof_lambda_close_to_1}), to propagate recursively the inequality
\begin{equation*}
    \dP\left( X_d \geq x \right) \leq e^{-(x-c)_{+}}.
\end{equation*}

\subsubsection*{Implications for a  hypothesis testing problem}
Let a pair of trees $(\cT,\cT')$ be distributed according to $GW(\lambda)$ under the null hypothesis $\cH_0$, and according to $GW(\lambda,s)$ under the alternative hypothesis $\cH_1$. They are thus independent under $\cH_0$, and correlated under $\cH_1$. Consider the following test:
$$
\hbox{Decide }\cH_0\hbox{ if }\cW_d(\cT,\cT') < \gamma^d,\; \cH_1\hbox{ otherwise.}
$$
Assume that $\gamma(\lambda)<\gamma<\lambda s$. Then in view of Remark~\ref{rem:1.5} and Theorem~\ref{lambda_close_to_1} one has for some $c(\gamma)>1$:
$$
\dP\left(\hbox{decide }\cH_1\big|\cH_0\right)=O\left(e^{-c(\gamma)^d}\right),
$$
thus a super-exponential decay of the probability of false positive (first type error). Conversely, in view of Proposition \ref{prop_exponent_correlated}, noting $\cT_{\cap}$ the intersection tree under $\cH_1$, one has
$$
\dP\left(\hbox{decide }\cH_0\big|\cH_1, \hbox{non-extinction of }\cT_{\cap}\right)=o_d(1).
$$
The false negative probability of this test thus also goes to zero, provided the intersection tree survives. As we will see in the next section, this hypothesis testing problem on a pair of random trees is related to our original graph alignment problem much as the so-called tree reconstruction problem, reviewed in \cite{DBLP:conf/dimacs/Mossel01}, is related to community detection in sparse random graphs (see e.g. \cite{Bordenave15}).

\subsection{Matching rate of intersecting trees}
\begin{theoreme}
\label{lambda_close_to_1_delta}
Let $(\cT,\cT') \sim GW(\lambda,s,\delta)$ with $\delta \geq 1$ and $s \in [0,1]$. Let $\gamma(\lambda,s,\delta):=\gamma(\cT,\cT')$. There exists $\lambda_0>1$ such that for all $\lambda \in (1,\lambda_0]$ we have
\begin{equation}
    \sup_{\delta \geq 1} \gamma(\lambda,s,\delta) < \lambda.
\end{equation}
\end{theoreme} \comm{En effet, ici la preuve ne dépend même pas de $s$ ? Cela peut encore paraître bizarre mais ça roule dans la preuve.}
Evaluations of $\gamma(\lambda,s,\delta)$ by simulations, confirming the Theorem,  are provided in Appendix \ref{app:tree}.
\paragraph{Proof outline} The full proof of Theorem \ref{lambda_close_to_1_delta} is detailed in the appendix (\ref{appendix_proof_lambda_close_to_1_delta}), but we here give the key steps. The proof will again be by induction on $d$, the initial step being established with the same argument as in the proof of Theorem \ref{lambda_close_to_1}. 
$\eps = \lambda -1$ is assumed to be small enough. We fix $r \in (0,1)$, and we let $\gamma=1+r\eps'$. We now work with the random variables
$$X'_d := \gamma^{-(d-d_0)} m^{-1} \cW_d\left(\cT_{d},\cT'_d\right),$$ conditionally on the event that the path from $\rho$ to $\rho'$ survives down to depth $d$ in $\cT$. Then, considering $D$ the number of children of $\rho$ in $\cT_{d}$, $D'$ the number of children of $\rho'$ in $\cT'_{d}$ that are in the intersection tree $\cT_d \cap \cT'_{d}$, and $D''$ the number of children of $\rho'$ in $\cT'_{d} \setminus \cT_d$, we establish for all $x>0$ a recursive formula of the following form
\begin{equation*}
    \dP\left( X'_d \geq x \right) \leq \sum_{k,\ell \geq 1} \dP \left( D'+D''= k, D=\ell\right) \min \left( 1, (k \vee \ell)^{\underline{k \wedge \ell}} \, \dP \left( X'_{d-1} + \sum_{i=1}^{k \wedge \ell-1} X_{d-1,i,u} \geq \gamma x \right)\right),
\end{equation*}  where the $X_{d-1,i,u}$ are i.i.d. copies of $X_{d-1}$ as defined in the proof of Theorem \ref{lambda_close_to_1}. Again, with a few more technical steps (see \ref{appendix_proof_lambda_close_to_1_delta}), we are able to propagate recursively the inequality
\begin{equation*}
    \dP\left( X'_d \geq x \right) \leq e^{-(x-c)_{+}}.
\end{equation*}

\section{\label{section_sparse_graph_alignment} Sparse graph alignment}
We now describe our main algorithm and its theoretical guarantees. For simplicity we assume  that the underlying permutation $\sigma$ is the identity.

\subsection{Neighborhood Tree Matching Algorithm (NTMA), main result}

\comm{avec beacoup de sollicitations des reviewers, il semble important de mettre ici un paragraphe donnant de l'intuition sur notre algo, avant de l'écrire.}
The main intuition for the NTMA algorithm is as follows. In order to distinguish matched pairs of nodes $(i,u)$, we consider their neighborhoods at a certain depth $d$, that are close to Galton-Watson trees. In the case where the two vertices are actual matches, the largest common subtree measured in terms of children at depth (exactly) $d$ is w.h.p. of size $\geq (\lambda s)^d $. However, when the two nodes $i$ and $u$ are sufficiently distant, previous study of matching rates shows that the growth rate of largest common subtree will be $< \lambda s$.  The natural idea is thus to apply the test comparing $\cW_d(i,u)$ to $\gamma^d$ for some well-chosen $\gamma$ to decide whether $i$ is matched to $u$. 

But as the reader may have noticed, testing $\cW_d(i,u) > \gamma^d$ is not enough, because two-hop neighbors would dramatically increase the number of incorrectly matched pairs, making the performance collapse. To fix this, we use the \textbf{dangling trees trick}: instead of just looking at their neighborhoods, we look for the downstream trees from two distinct neighbors $j \neq j'$ of $i$, and $v \neq v'$ of $u$. The trick is now to compare both $\cW_{d-1}(j \leftarrow i,v \leftarrow u)$  and $\cW_{d-1}(j' \leftarrow i,v' \leftarrow u)$ to $\gamma^{d-1}$. This way, even if $i \neq u$ and $i$ and $u$ are close by, the pairs of rooted trees that can be considered will lead to one of the four cases considered and illustrated on Figure \ref{fig_parrallel_construction_bis}, that are settled in the proof of Theorem \ref{no_mismatchs}. \comm{certes cela rajoute du volume mais parait indispensable}

Our algorithm is as follows, where matching tree weights $\cW_{d-1}(j \leftarrow i, v \leftarrow u )$ are defined in \eqref{eq:weight_cmp}:

\begin{algorithm}[H]
\caption{\label{algo_theorique}Neighborhood Tree Matching Algorithm for sparse graph alignment}
\SetAlgoLined

\textbf{Input:} Two graphs $G_1$ and $G_2$ of size $n$, average degree $\lambda$, depth $d$, parameter $\gamma$.

\textbf{Output:} A set of pairs $\cS \subset V(G_1) \times V(G_2)$.

$\mathcal{S} \gets \varnothing$

\For{$(i,u) \in V(G_1) \times V(G_2)$}{
	\If{$\cB_{G_1}(i,d)$ and $\cB_{G_2}(u,d)$ contain no cycle, and $\exists j \neq j' \in \mathcal{N}_{G_1}(i), \exists v \neq v' \in \mathcal{N}_{G_2}(u)$ such that $\cW_{d-1}(j \leftarrow i, v \leftarrow u )> \gamma^{d-1} $ and $\cW_{d-1}(j' \leftarrow i, v'
	\leftarrow u )> \gamma^{d-1} $}{	
	$\cS\gets \cS \cup \left\lbrace (i,u) \right\rbrace $
	}
}
\textbf{return} $\cS$

\end{algorithm}

\begin{remarque}
For $d = \lfloor c \log n \rfloor$, in view of Remark \ref{rem:complex1},  with high probability the complexity of NTMA is $$O \left( \left|V(G_1)\right| \left|V(G_2)\right| (\log n)^2 n^{2c \log \lambda}  d_\mathrm{max}^2  \right) + O \left( \left|E(G_1)\right| \left|E(G_2)\right| (\log n)  d_\mathrm{max}^3  \right),$$ where $d_\mathrm{max}$ is the maximum degree in $G_1$ and $G_2$. In the context of Theorems \ref{lot_of_matchs} and \ref{no_mismatchs} the complexity is then 
$O \left( (\log n)^4 n^{5/2}  \right)$.
\end{remarque}
The two results to follow will readily imply Theorem I.
\begin{theoreme}
\label{lot_of_matchs}
Let $(G_1,G_2) \sim ERC(n,\lambda/n,s)$ be two $s-$correlated \ER graphs such that $\lambda s>1$. Let $d = \lfloor c \log n \rfloor$ with $c \log \left(\lambda\left(2-s\right)\right)<1/2$. Then for $\gamma\in(1,\lambda s)$, with high probability,
\begin{equation}
\label{lot_of_matchs_eq}
\frac{1}{n} \sum_{i=1}^{n} \mathbf{1}_{\lbrace (i,i) \in \cS \rbrace}=\Omega(1).
\end{equation} 
In other words, a non vanishing fraction of nodes is correctly recovered by NTMA (\ref{algo_theorique}).
\end{theoreme}

\begin{theoreme}
\label{no_mismatchs}
Let $(G_1,G_2) \sim ERC(n,\lambda/n,s)$ be two $s-$correlated \ER graphs. Assume that $\gamma_0(\lambda):= \max \left(\gamma(\lambda),\sup_{\delta \geq 1} \gamma(\lambda,s,\delta) \right)<\lambda s$, and that $d = \lfloor c \log n \rfloor$ with $c \log \lambda<1/4$. Then for $\gamma \in(\gamma_0(\lambda),\lambda s)$, with high probability,
\begin{equation}
\label{no_mismatchs_eq}
\mathrm{err}(n):=\frac{1}{n}\sum_{i=1}^{n} \mathbf{1}_{\lbrace \exists u  \neq i, \; (i,u) \in \cS \rbrace}=o(1),
\end{equation} 
i.e. only at most a vanishing fraction of nodes are incorrectly matched by NTMA (\ref{algo_theorique}).
\end{theoreme}

\begin{remarque}
The set $\cS$ returned by the NTMA is not necessarily a matching. Let $\cS'$ be obtained by removing all pairs $(i,u)$ of $\cS$ such that $i$ or $u$ appears at least twice. Theorems \ref{lot_of_matchs} and \ref{no_mismatchs} guarantee that $\cS'$ still contains a non-vanishing number of correct matches and a vanishing number of incorrect matches. Theorem I easily follows. Simulations  of NTMA--2, a  simple variant of  of NTMA, are reported  in Appendix \ref{appendix_simulations_simple_variant}. These confirm our theory, as the algorithm returns many good matches and few mismatches.  
\end{remarque}

\subsection{Proof strategy}
We start by stating Lemmas,  adapted from \cite{Massoulie13} and \cite{Bordenave15} and proven in Appendix \ref{appendix_proof_lemmas_sec_2}, that are instrumental in the proofs of 
 Theorems \ref{lot_of_matchs} and \ref{no_mismatchs}.

\begin{lemme}[Control of the sizes of the neighborhoods]
	\label{control_S}
	Let $G \sim ER(n,\lambda/n)$, $d = \lfloor c \log n \rfloor$ with $c \log \lambda <1$. For all $\gamma>0$, there is a constant $C=C(\gamma)>0$ such that with probability $1-O\left(n^{-\gamma}\right)$, for all $i \in [n]$, $t \in [d]$:
	\begin{equation}
	\label{control_S_eq}
	\left| \cS_{G}(i,t) \right| \leq C (\log n) \lambda^t.
	\end{equation}
\end{lemme}

\begin{lemme}[Cycles in the neighborhoods in an $ER$ graph]
\label{cycles_ER}
Let $G \sim ER(n,\lambda/n)$, $d = \lfloor c \log n \rfloor$ with $c \log \lambda <1/2$. There exists $\eps>0$ such that for any vertex $i \in [n]$, one has
\begin{equation}
\label{cycles_ER_eq}
\mathbb{P}\left(\cB_G(i,d) \mbox{ contains a cycle}\right) = O\left( n^{-\eps}\right).
\end{equation}
\end{lemme}

\begin{lemme}[Two logarithmic neighborhoods are typically size-independent]
\label{indep_neighborhoods}
Let $G \sim ER(n,\lambda/n)$ with $\lambda >1$, $d = \lfloor c \log n \rfloor$ with $c \log \lambda < 1/2 $. Then there exists $\eps>0$ such that for any fixed nodes $i \neq j$, the variation distance between the joint law of the neighborhoods $\mathcal{L} \left(\left(\cS_{G}(i,t),\cS_{G}(j,t)\right)_{t \leq d}\right)$ and the product law $\mathcal{L} \left(\left(\cS_{G}(i,t)\right)_{t \leq d}\right) \otimes \mathcal{L} \left(\left(\cS_{G}(j,t)\right)_{t \leq d}\right)$ tends to $0$ as $O\left(n^{-\eps}\right)$ for some $\eps>0$ when $n \to \infty$.
\end{lemme}

\begin{lemme}[Coupling the $\left|\cS_{G}\left(i,t\right)\right|$ with a Galton-Watson process]
\label{coupling_GW}
Let $G \sim ER(n,\lambda/n)$, $d = \lfloor c \log n \rfloor$ with $c \log \lambda<1/2$. For a fixed $i \in [n]$, the variation distance between the law of $\left( \left|\cS_{G}(i,t)\right|\right)_{t \leq d}$ and the law of $\left( Z_t\right)_{t \leq d}$ where $(Z_t)_t$ is a Galton-Watson process of offspring distribution $\Poi(\lambda)$ tends to 0 as $O\left( n^{-\eps}\right)$ when $n \to \infty$.
\end{lemme}

\subsubsection*{Proof of Theorems \ref{lot_of_matchs} and \ref{no_mismatchs}}
\begin{proof}[Proof of Theorem \ref{lot_of_matchs}]
Define the joint graph $G_{\cup} = G_1 \cup G_2$. For $i \in [n]$, let $M_i$ denote the event that the algorithm matches $i$ in $G_1$ with $i$ in $G_2$, i.e. on which $\mathcal{B}_{G_1}(i,d)$ and $\mathcal{B}_{G_2}(i,d)$ contain no cycle, and $\exists j \neq j' \in \mathcal{N}_{G_1}(i), \exists v \neq v' \in \mathcal{N}_{G_2}(i)$ such that $\mathcal{W}_{d-1}((j,v),(i,u))>\gamma^{d-1}$ and $\mathcal{W}_{d-1}((j',v'),(i,u))>\gamma^{d-1}$. Denote by ${C}_{\cup,i,d}$ the event that there is no cycle in $\mathcal{B}_{G_{\cup}}(i,d)$. 

Arguing as in the proof of Lemma \ref{coupling_GW}, the two neighborhoods  $\mathcal{B}_{G_1}(i,d)$ and $\mathcal{B}_{G_2}(i,d)$ can be coupled with trees distributed as $GW(\lambda,s)$ of Section \ref{section_tree_matching}. However, we will instead consider the intersection graph $G_{\cap} = G_1 \cap G_2$. Obviously,  $G_{\cap} \sim ER(n,\lambda s/n)$. By Lemma \ref{coupling_GW}, the random variables $|\mathcal{S}_{G_{\cap}}(i,t)|$ can be coupled with a Galton-Watson process with offspring distribution $\Poi(\lambda s)$ up to depth $t=d$. Let $P_i$ denote the event that this coupling succeeds. Since $\lambda s>1$, there is a probability $2 \alpha >0$ that this process survives up to depth $d$ and that the first generation has at least two children. Note $S$ this event. On event  $S$, the matching given by the identity on  the intersection tree implies the existence of two neighbors $ j \neq j' \in \mathcal{N}_{G_1}(i)$ and $ v \neq v' \in \mathcal{N}_{G_2}(i)$ such that with high probability $\mathcal{W}_{d-1}(j \leftarrow i, v \leftarrow u)>\gamma^{d-1}$ and $\mathcal{W}_{d-1}(j' \leftarrow i, v' \leftarrow u)>\gamma^{d-1}$, by standard martingale arguments, as in Proposition \ref{prop_exponent_correlated}. This gives the lower bound for $\mathbb{P}(M_i)$:
$$
\mathbb{P}(M_i) \; \geq \mathbb{P}\left({C}_{\cup,i,d} \cap P_i \cap S \right)\; \geq 2\alpha -o(1) > \alpha > 0.
$$
It is easy to see that $G_{\cup} \sim ER(n,\lambda(2-s)/n)$. For $i \neq j \in [n]$, define $I_{i,j}$ the event on which the two neighborhoods of $i$ and $j$ in $G_{\cup}$ coincide with their independent couplings up to depth $d$. By lemma \ref{indep_neighborhoods}, $\mathbb{P}(I_{i,j})=1-o(1)$. Then for $0<\eps<\alpha$ Markov's inequality yields

\begin{flalign}
\mathbb{P}\left(\frac{1}{n} \sum_{i=1}^{n} \mathbf{1}_{\lbrace (i,i) \in \cS \rbrace}<\alpha-\eps\right) & \leq \mathbb{P}\left(\sum_{i=1}^{n} \left(\mathbb{P}(M_i)-\mathbf{1}_{M_i}\right)>\eps n\right)\\
& \leq \frac{1}{n^2 \eps^2} \left(n \mathrm{Var}\left(\mathbf{1}_{M_1}\right)+ n(n-1)\mathrm{Cov}\left(\mathbf{1}_{M_1},\mathbf{1}_{M_2}\right) \right)\\
& \leq \frac{\mathrm{Var}\left(\mathbf{1}_{M_1}\right)}{n \eps^2} + \frac{1-\mathbb{P}\left(I_{1,2}\right)}{ \eps^2} \to 0. 
\end{flalign}

\end{proof}

\paragraph{Proof strategy for Theorem  \ref{no_mismatchs}.} \comm{j'ai donc repoussé la preuve en appendice, et écrit cela à la place}
Consider  two distinct nodes $i$ and $u$. We place ourselves on the event of high probability that $\mathcal{B}_{G_{\cup}}(i,2d)$ has no cycle. On this event, the two neighborhoods $\cB_{G_1}(i,d)$ and $\cB_{G_2}(u,d)$ can be coupled with two trees rooted at $i,u$ respectively.
We then distinguish several cases that are shown on Figure \ref{fig_parrallel_construction_bis}, that require detailed analysis, and which all show that for $i$ fixed, one has 
$$ \mathbb{P}\left(\exists u  \neq i, \; (i,u) \in \cS\right) = o(1).$$
The full proof is deferred to Appendix \ref{app:th2.2}.

\section{Conclusion}
We have introduced NTMA, an algorithm we proved to succeed at partial alignment of sparse correlated random graphs. While our Theorem applies to a limited range of average degrees $\lambda$, we conjecture that rates $\gamma(\lambda)$ and $\gamma(\lambda,s,\delta)$ are strictly less than $\lambda$ for all $\lambda>1$ and $s<1$ and thus NTMA in fact succeeds for a much broader parameter range. This will be the object of future work. 
\newpage

\section*{Acknowledgments}
This paper was partially supported by the Paris Artificial Intelligence Research Institute (PRAIRIE). 

\bibliography{BibCommunityDetection,biblio_sparse_alignment}
\appendix
\section{\label{appendix_algos_W}Algorithms for matching weights}
We here describe algorithms to compute recursively matching weights $\cW_d(i \leftarrow j, u \leftarrow v)$ and $\cW_d(i , u)$.
\begin{algorithm}[H]
	\caption{\label{algo_W_d(i,u)}$\cW_d(i,u)$}
	\SetAlgoLined
	\uIf{$d=0$}{
	    \textbf{return} 1\;}
	\Else{$\cE \gets \cN_\cT(i) $ \; 
	
	    $\cF \gets \cN_{\cT'}(i)$ \; 
	    \For{$(k,w) \in \cE \times \cF$}{
	        Compute $\cW_{d-1}(k \leftarrow i, w \leftarrow u)$\;}
	    Solve the LAP problem $w^* := \sup_{\mathfrak{m} \in \cM\left(\cE,\cF\right)} \sum_{(k,w) \in \mathfrak{m}} \cW_{d-1}(k \leftarrow i, w \leftarrow u)$\; 
	    
	    \textbf{return} $w^*$\;}
\end{algorithm}
\begin{algorithm}[H]
	\caption{\label{algo_W_d(i,u,j,v)}$\cW_d(i \leftarrow j, u \leftarrow v )$}
	\SetAlgoLined
	\uIf{$d=0$}{
	    \textbf{return} 1\;}
	\Else{$\cE \gets \cN_\cT(i) \setminus \lbrace j \rbrace$ \; 
	
	    $\cF \gets \cN_{\cT'}(i) \setminus \lbrace v \rbrace$ \; 
	    \For{$(k,w) \in \cE \times \cF$}{
	        Compute $\cW_{d-1}(k \leftarrow i, w \leftarrow u )$\;}
	    Solve the LAP problem $w^* := \sup_{\mathfrak{m} \in \cM\left(\cE,\cF\right)} \sum_{(k,w) \in \mathfrak{m}} \cW_{d-1}(k \leftarrow i, w \leftarrow u)$\; 
	    
	    \textbf{return} $w^*$\;}
\end{algorithm}

\section{Simulations}
\subsection{Simulations for tree matching}\label{app:tree}
We here present some simulations of matching rates $\gamma(\lambda)$ (figure \ref{gamma_lambda}) and $\gamma(\lambda,s,\delta)$ for $s=1$ (figure \ref{gamma_lambda_delta}) in order to illustrate Theorems \ref{lambda_close_to_1} and \ref{lambda_close_to_1_delta} and the final conjecture. For these simulations, error bars correspond to one standard deviation.

\begin{figure}
\floatconts
{nomatter1}
{\caption{\label{gamma_lambda} Comparison of $d \log \lambda$ (blue) and $\log \cW_d(\cT,\cT')$ (red) for $\cW_d(\cT,\cT') \sim GW(\lambda)$ conditioned to survive ($100$ iterations).}}
{%
\subfigure[\footnotesize{$\lambda=1.2$, $\log \lambda \sim 0.18$. Red dashed slope $\sim 0.12$.}]{%
\includegraphics[scale=0.47]{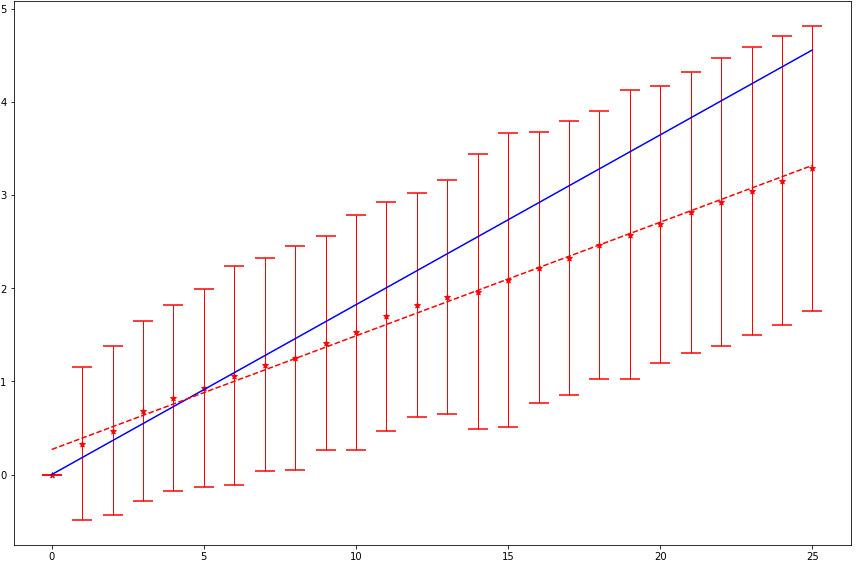}
}

\subfigure[\footnotesize{$\lambda=2.2$, $\log \lambda \sim 0.79$. Red dashed slope $\sim 0.65$.}]{%
\includegraphics[scale=0.47]{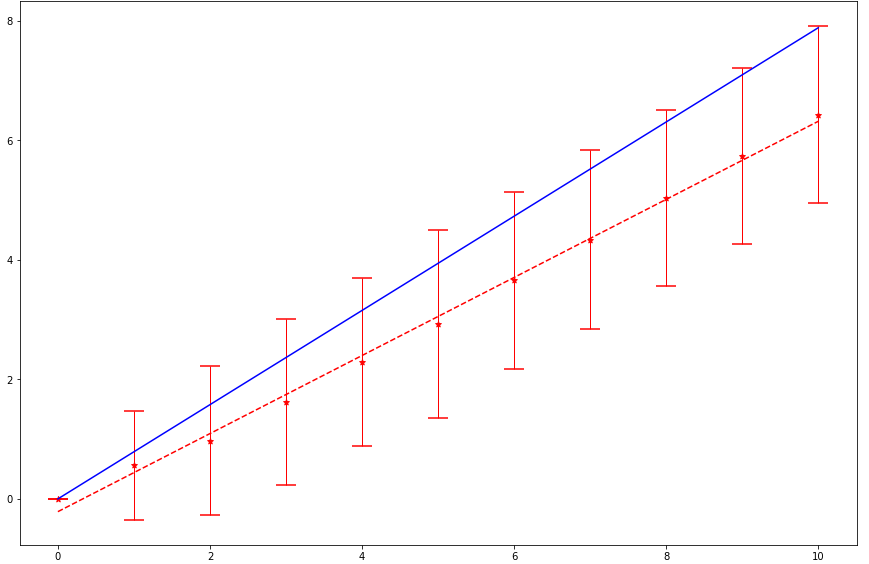}
}

\subfigure[\footnotesize{$\lambda=3.2$, $\log \lambda \sim 1.16$. Red dashed slope $\sim 1.03$.}]{%
\includegraphics[scale=0.47]{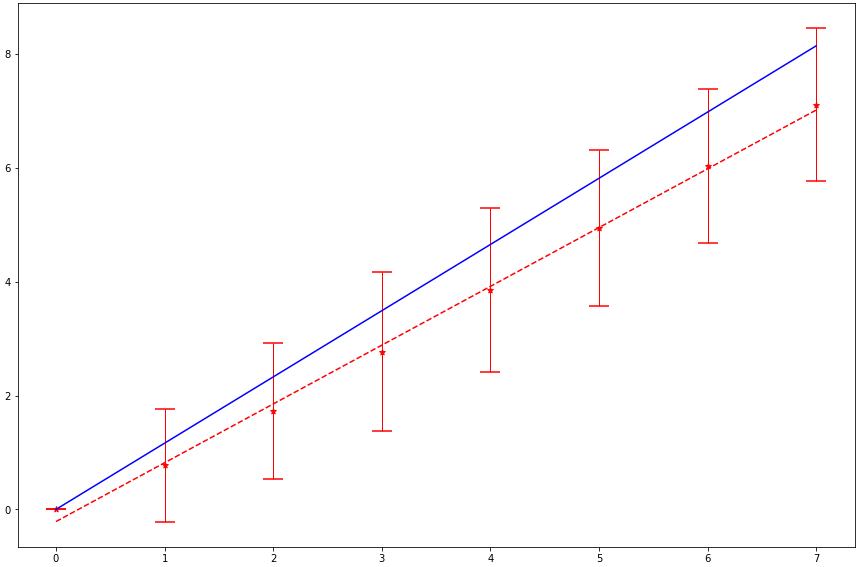}
}
}
\end{figure}

\begin{figure}
\floatconts
{nomatter2}
{\caption{\label{gamma_lambda_delta} Comparison of $d \log \lambda$ (blue) and $\log \cW_d(\cT,\cT')$ (red) for $\cW_d(\cT,\cT') \sim GW(\lambda,s,\delta)$ with $s=1$, conditioned to survive ($50$ iterations).}}
{%
\subfigure[\footnotesize{$\lambda=2.1, \log \lambda \sim 0.74, \delta=1$. Red dashed slope $\sim 0.63$.}]{%
\includegraphics[scale=0.9]{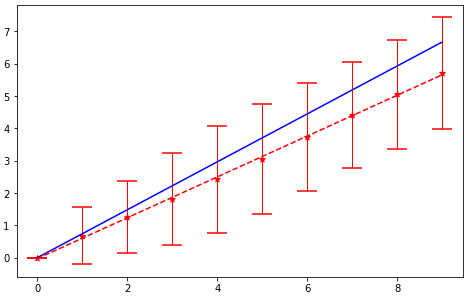}
}

\subfigure[\footnotesize{$\lambda=2.1, \log \lambda \sim 0.74, \delta=2$. Red dashed slope $\sim 0.63$.}]{%
\includegraphics[scale=0.9]{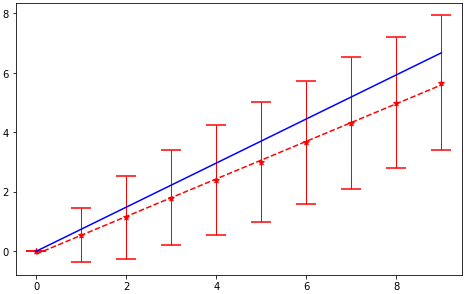}
}

\subfigure[\footnotesize{$\lambda=2.1, \log \lambda \sim 0.74, \delta=5$. Red dashed slope $\sim 0.62$.}]{%
\includegraphics[scale=0.9]{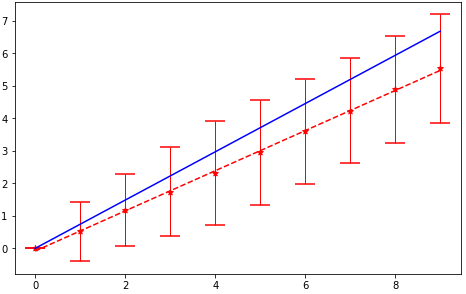}
}

}
\end{figure}

\subsection{\label{appendix_simulations_simple_variant} Simulations for a simple variant algorithm of NTMA}
We here present some simulations of simple variant algorithm of NTMA, NTMA--2, which happens to be more efficient in practice. The Algorithm NTMA--2 is as follows.
\begin{algorithm}[H]
	\caption{NTMA--2}
	\SetAlgoLined
	\textbf{Input:} Two graphs $G_1$ and $G_2$ of size $n$, average degree $\lambda$, depth $d$, parameter $\gamma$.

    \textbf{Output:} A set of pairs $\cS \subset V(G_1) \times V(G_2)$.

	$\cS \gets \varnothing$
		
	\For{$(i,u) \in V(G_1) \times V(G_2)$}{
		\If{$\cW_{d}(i,u)> \gamma^{d} $, $\cW_{d}(i,u)= \max_{j} \cW_{d}(j,u)$ and $\cW_{d}(i,u)= \max_{v} \cW_{d}(i,v)$}{
		$\cS \gets \cS \cup \left\lbrace (i,u) \right\rbrace $}
	}
	\For{$(i,u) \neq (j,v) \in \cS$}{
		\If{$i=j$}{$\cS \gets \cS \setminus \left\lbrace (i,y), y \in V(G_2)\right\rbrace $}
		\If{$u=v$}{$\cS \gets \cS \setminus \left\lbrace (x,u), x \in V(G_1)\right\rbrace $}
	}
	
	\textbf{return} $\cS$
\end{algorithm}
This algorithm only selects rows and columns weight maximums and match the corresponding pairs. The last part ensures that $\cS$ is a matching. For these simulations, error bars correspond to a confidence interval for the mean value of scores. In figures \ref{NTMA2_21} and \ref{NTMA2_31} we compare the scores of NTMA--2 for $s=0.95$ with the isomorphism case $s=1.0$, for different values of $n$. We illustrate the fact that nearly no vertex is mismatched, whereas a non-negligible fraction of nodes is indeed recovered. In figure \ref{sstar}, we compare the scores of NTMA--2 for fixed $n$ but varying $s$, illustrating the existence of a 'critical' parameter $s^*(\lambda)$.

\begin{figure}
\floatconts
{nomatter3}
{\caption{\label{NTMA2_21}Mean score of NTMA--2 for $\lambda=2.1$, $d=5$ (25 iterations per value of $n$).}}
{%
\subfigure[\footnotesize{$s=0.95$.}]{%
\includegraphics[scale=0.55]{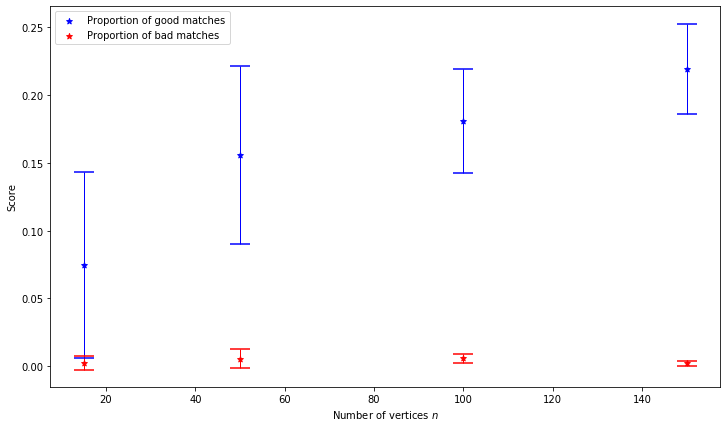}
}

\subfigure[\footnotesize{Isomorphism case , $s=1.0$.}]{%
\includegraphics[scale=0.55]{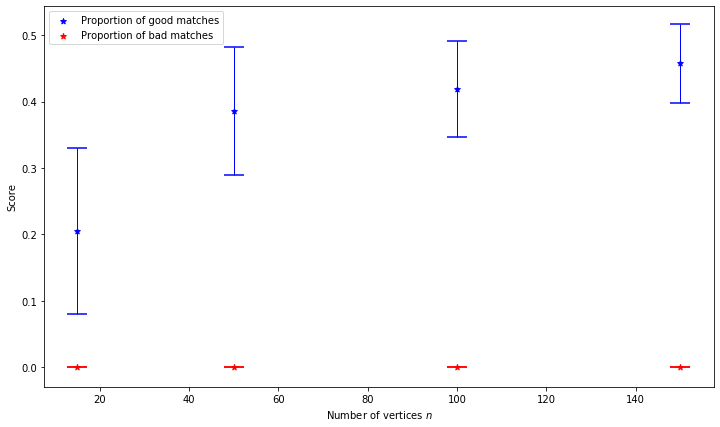} }
}
\end{figure}

\begin{figure}
\floatconts
{nomatter4}
{\caption{\label{NTMA2_31}Mean score of NTMA--2 for $\lambda=3.1$, $d=4$ (25 iterations per value of $n$).}}
{%
\subfigure[\footnotesize{$s=0.95$.}]{%
\includegraphics[scale=0.65]{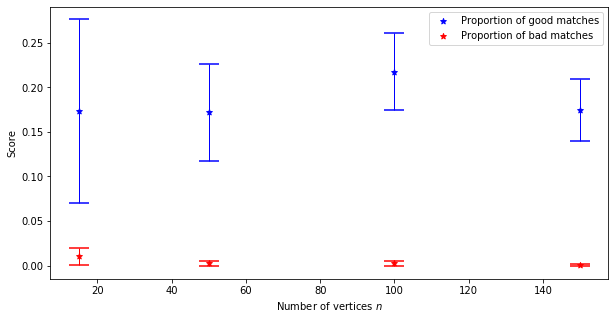}
}

\subfigure[\footnotesize{Isomorphism case , $s=1.0$.}]{%
\includegraphics[scale=0.65]{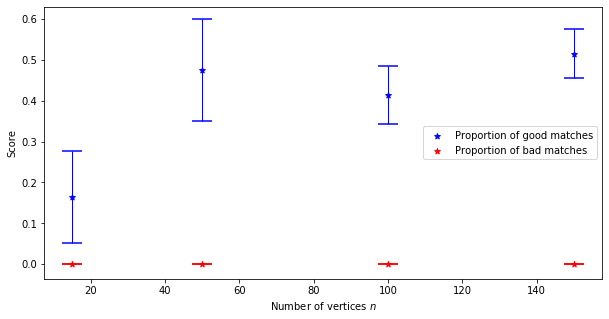}
}
}
\end{figure}

\begin{figure}
\floatconts
{nomatter5}
{\caption{\label{sstar}Mean score of NTMA--2 with different values of $s$ (25 iterations per value of $n$).}}
{%
\subfigure[\footnotesize{$n=150,\lambda=1.4,d=5$.}]{%
\includegraphics[scale=0.55]{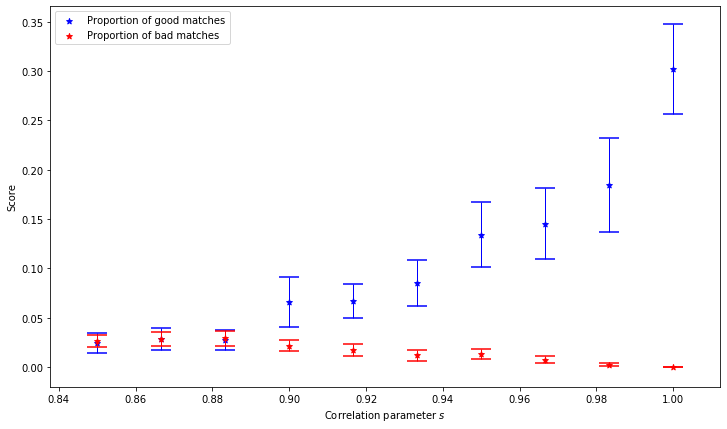} 
}

\subfigure[\footnotesize{$n=50,\lambda=2.2,d=3$.}]{%
\includegraphics[scale=0.55]{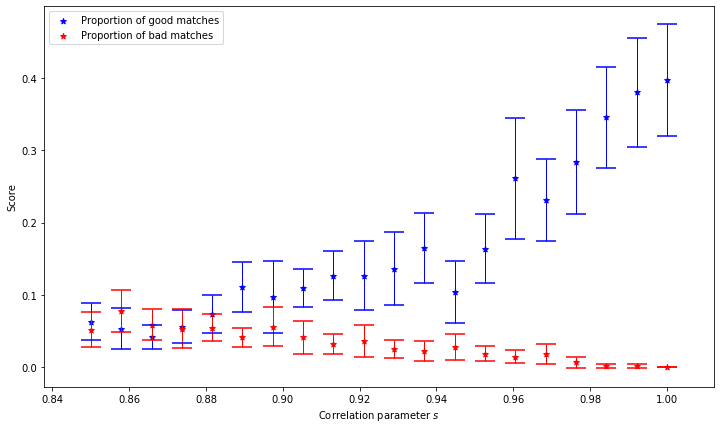}
}
}
\end{figure}

 \section{Detailed proofs for Section \ref{section_tree_matching}}

\subsection{\label{appendix_proof_lambda_close_to_1}Proof of Theorem \ref{lambda_close_to_1}}
\begin{proof}[Proof of Theorem \ref{lambda_close_to_1}]
We first state an easy corollary:
\begin{cor}
\label{cor_exponential_moments}
For any $d \geq 1$, the random variable $X=\left| \cL_d \left( \cT_d \right)\right|$ is such that $\dE\left[e^{\theta X} \right]< \infty$ for all $\theta>0$.
\end{cor}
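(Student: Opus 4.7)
The plan is to identify $X = |\cL_d(\cT_d)|$ with the size $Z_d$ of the $d$-th generation of the underlying Galton--Watson branching process, conditioned on survival to depth $d$, and then to exploit the fact that the Poisson probability generating function is entire, so that all its iterates remain finite on the whole real line.

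First, I would observe that the pruning operator $r_d$ never removes a node at depth exactly $d$: by definition it deletes every node at depth strictly greater than $d$, and then iteratively prunes leaves at depth strictly less than $d$, but any vertex at depth exactly $d$ is by construction a leaf of the truncated tree and is never touched by the iterative pruning step. Hence $|\cL_d(\cT_d)|$ has the same distribution as $Z_d$ conditionally on the non-extinction event $\{Z_d > 0\}$, where $(Z_k)_{k \geq 0}$ denotes a Galton--Watson process with $\Poi(\lambda)$ offspring started from $Z_0 = 1$.

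Next, let $M_d(\theta) := \mathbb{E}[e^{\theta Z_d}]$ denote the unconditional moment generating function. I would establish by induction on $d$ that $M_d(\theta) < \infty$ for every $\theta \geq 0$. The base case is trivial: $M_0(\theta) = e^{\theta}$. For the inductive step, the branching property gives, conditionally on $Z_1 = k$, that $Z_d$ is distributed as a sum of $k$ i.i.d.\ copies of $Z_{d-1}$, hence
\[
M_d(\theta) \;=\; \mathbb{E}\bigl[M_{d-1}(\theta)^{Z_1}\bigr] \;=\; g\bigl(M_{d-1}(\theta)\bigr),
\]
where $g(s) = e^{\lambda(s-1)}$ is the probability generating function of $\Poi(\lambda)$. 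Since $g$ is entire and $M_{d-1}(\theta) < \infty$ by the induction hypothesis, $M_d(\theta) < \infty$ follows for all $\theta \geq 0$.

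Finally, passing from the unconditional to the conditional MGF is immediate: $\mathbb{E}[e^{\theta X}] \leq M_d(\theta)/(1-p_d) < \infty$, since $p_d < 1$ as soon as $\lambda > 1$ (and in any case the survival event has positive probability here). There is no genuine obstacle in this argument; the only step that merits a line of care is the identification of $|\cL_d(\cT_d)|$ with $Z_d$ conditioned on survival, which falls out of the definition of $r_d$, after which the entireness of the Poisson PGF trivially propagates finiteness through the iteration $M_d = g \circ M_{d-1}$.
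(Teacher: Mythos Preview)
Your proof is correct and rests on the same inductive mechanism as the paper's (finiteness propagates because the Poisson probability generating function is entire). The paper phrases the induction through Lemma~\ref{lemma_pruning_trees}, working directly with the recursive structure of $\cT_d$ as a tree with truncated-Poisson offspring at each level; you instead sidestep that lemma by noting that $r_d$ never removes depth-$d$ vertices, so $|\cL_d(\cT_d)|$ has the law of $Z_d$ conditioned on $\{Z_d>0\}$, and then you run the induction on the unconditioned process before dividing by $1-p_d$ at the end. This is a mild but genuine simplification, since it does not require the structural description of $\cT_d$ from Lemma~\ref{lemma_pruning_trees}; the paper's route, on the other hand, stays entirely within the conditioned object, which is the one actually used downstream.
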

\begin{proof}
This is easily seen by induction, based on the structure of $\cT_d$ given in Lemma \ref{lemma_pruning_trees}.
\end{proof}
Recall that we let $\cE_d$ (respectively, $\cE'_d$) denote the event that tree $\cT$ (respectively, $\cT'$) becomes extinct before $d$ generations, i.e. $\cL_d(\cT)=\varnothing$ (respectively, $\cL_d(\cT')=\varnothing$). We let $p_{d}=\dP(\cE_d)$. It is well known that it satisfies the recursion
$$p_{0}=0,\; p_{d}=e^{-\lambda(1-p_{d-1})},$$
and converges monotonically to the smallest root in $[0,1]$ of $x=e^{-\lambda(1-x)}$. This root, that we denote $p_e$, is the probability of ultimate extinction. For small enough $\eps=\lambda-1$, it holds that 
$$p_e=1- 2\eps +O(\eps^2),$$ as can be seen by analysis of the fixed point equation satisfied by $p_e$. Let then $d_0$ be such that for all $d\ge d_0$, $p_{d}=1-2\eps+O(\eps^2)$. Clearly, on the event $\cE_d\cup \cE'_d$, the set of matchings $\cM_d(\cT,\cT')$ is empty, so that $\cW_d(\cT,\cT')=0$. Recall that we define $\cT_d$ the random variable $r_d(\cT)$ where $\cT$ is conditioned to survive up to depth $d$.\\
Now fix $r\in(0,1)$. We shall prove that for sufficiently small $\eps>0$, letting $\gamma=1+r \eps$, there exists some constants $c,m,d_0>0$ such that for all $x > 0$, all $d\ge d_0$, one has
\begin{equation} 
\label{exponential_control_eq}
\dP\left(\cW_d(\cT_d,\cT'_d) \ge \gamma^{d-d_0} m x\right)\le e^{-(x-c)_+}.
\end{equation}
We proceed by induction over $d-d_0$. To initialize the induction, notice that one obviously has $\cW_{d_0}(\cT_{d_0},\cT'_{d_0})\le |\cL_{d_0}(\cT_{d_0})|=:X$. By Corollary \ref{cor_exponential_moments}, for all $m,x,\theta>0$, one has:
\begin{flalign*}
\dP\left(  \cW_{d_0}(\cT_{d_0},\cT'_{d_0})>m x \right)\le \dP(X> mx) \le \dE e^{\theta X} e^{-\theta m x}.
\end{flalign*} Let now $\theta=1/m$. By taking $m$ sufficiently large, from dominated convergence we can make $\dE e^{(1/m)X}$ as close to $1$ as we like. Choose for instance $m$ such that $\dE e^{(1/m)X}\le 2$.
Then
\begin{align*}
\dP(\cW_{d_0}(\cT_{d_0},\cT'_{d_0})>m x) \le 2 e^{-x} \leq e^{-x+c}.
\end{align*} for any $c\ge \ln(2)$. Hence, for sufficiently large $m$, we can initialize the induction at $d=d_0$ with any $c \ge \ln(2)$.\\ 

Recall we set $\gamma=1+r\eps$. Define the random variables
$$X_d := \gamma^{-(d-d_0)} m^{-1} \cW_d\left(\cT_{d},\cT'_{d}\right).$$
Let $D$ (resp. $D'$) denote the number of children of the root in $\cT_{d}$ (resp. $D'$ in $\cT'_{d}$). Given $D$ and $D'$, noting $\cT_d=\left(\cT_{d-1,1},\ldots,\cT_{d-1,D}\right)$ and $\cT'_d=\left(\cT'_{d-1,1},\ldots,\cT'_{d-1,D'}\right)$, we have that
\begin{equation*}
\cW_d(\cT_d,\cT'_d)=\sup_{\mathfrak{m} \in \cM([D],[D'])} \sum_{(i,u)\in \mathfrak{m}} \cW_{d-1}(\cT_{d-1,i},\cT'_{d-1,u}),  
\end{equation*} where $\cM([D],[D'])$ denotes the set of all $(D \vee D')^{\underline{D\wedge D'}}$ maximal injective mappings between $\cE_0 \subseteq [D]$ and $[D']$. Let $$X_{d-1,i,u}:=\gamma^{-(d-1-d_0)} m^{-1}\cW_{d-1}(\cT_{d-1,i},\cT'_{d-1,u}).$$ 
Note that conditional on $D$ and $D'$, for each matching $\mathfrak{m} \in \cM([D],[D'])$, the variables $\left( X_{d-1,i,u} \right)_{(i,u)\in \mathfrak{m}}$ are i.i.d. with the same distribution as $X_{d-1}$. The induction hypothesis states that each $X_{d-1,i,u}$ is less, for the strong stochastic ordering of comparison of cumulative distribution functions, than $c$ plus an exponential random variable with parameter 1. With an easy union bound, we can derive the following bounds:

\begin{equation}
\label{eq_key_bound}
\dP\left(X_d>x\right) \leq \sum_{1 \leq k \leq \ell < \infty}\dP\left(D \wedge D'= k, D \vee D'= \ell\right) \min\left( 1, \ell^{\underline{k}} \, \dP \left( \cE_1 + \ldots + \cE_k > \gamma x - kc \right) \right),
\end{equation}
where $\cE_1, \ldots, \cE_k$ are independent exponential random variables of parameter $1$. Lemma \ref{lemma_pruning_trees} states that
\begin{equation*}
\dP(D=k)=e^{-\lambda(1- p_{d-1})}\frac{\lambda^k(1-p_{d-1})^k}{k! \left(1-p_{d}\right)} =:  q_{d,k}.
\end{equation*}We can increase $d_0$ such that for some constant $\kappa>0$, for all $d\ge d_0$:
$$
q_{d,1}\le 1-\eps+\kappa\eps^2, \quad q_{d,k}\le \frac{(3\eps)^{k-1}}{k!},\; k\ge 2.
$$
Note that for $x\le c$, there is nothing to prove in (\ref{exponential_control_eq}), since a probability is always upper-bounded by $1$. We thus only need to consider the case $x>c$. We conclude the proof of this Theorem by appealing to the following

\begin{lemme}
\label{lemma_q_control}
Let $\kappa,C>0$ and $r\in(0,1)$ be given constants. Then there exists $c>0$ large enough and $\eps_0>0$ such that, for all $\eps\in(0,\eps_0)$, letting  $\gamma=1+r\eps$, $q_1=1-\eps+\kappa \eps^2$, $q_k=(C \eps)^{k-1}/k!$ for $k \geq 2$, one has
\begin{equation}
\label{lemma_q_control_eq}
\forall x>c,\quad \sum_{k,\ell \ge 1} q_k q_l \min\left(1, (k \vee \ell)^{\underline{k \wedge \ell}} \, \dP\left(\cE_1 + \ldots + \cE_{k \wedge \ell}>\gamma x-(k \wedge \ell)c\right) \right)\le e^{-(x-c)},
\end{equation}
where the $\cE_i$ are independent exponential random variables of parameter $1$.
\end{lemme}
Its assumptions are indeed verified here with $C=3$, so (\ref{exponential_control_eq}) can be propagated by using this Lemma in (\ref{eq_key_bound}), and the conclusion of Theorem \ref{lambda_close_to_1} follows.

\end{proof}

\subsection{\label{appendix_proof_lambda_close_to_1_delta}Proof of Theorem \ref{lambda_close_to_1_delta}}
\begin{proof}[Proof of Theorem \ref{lambda_close_to_1_delta}]
We assume that $\lambda =1+\eps$. We fix $r \in (0,1)$, and we let $\gamma=1+r\eps$ for some fixed $r\in(0,1)$. \comm{là j'explique un peu le conditionnement}. We work with trees such that $(\cT,\cT') \sim GW(\lambda,s,\delta)$. If we assume that the path from $\rho$ to $\rho'$ does not survive down to depth $d$ in $\cT$, then this path is no more present in $\cT_d$, and the two trees $\cT_d$ and $\cT'_d$ can be coupled with two trees $\widetilde{\cT}_d$ and $\widetilde{\cT}'_d$ where $(\widetilde{\cT},\widetilde{\cT}') \sim GW(\lambda)$, and we are in the case of Theorem \ref{lambda_close_to_1}.

In the following proof, we will thus condition to the event $S_{\rho,d}$ that the path from $\rho$ to $\rho'$ survives down to depth $d$ in $\cT$. Recall that the tree $\cT_d$ (resp. $\cT_d$) is obtained, conditionally on the fact that $\cT$ (resp. in $\cT'$) survives down to depth $d$, by suppressing nodes at depth greater than $d$ in $\cT$ (resp. in $\cT'$), and then pruning alternatively leaves of depth strictly less than $d$. As in the proof of Theorem \ref{lambda_close_to_1}, we shall establish that for sufficiently small $\eps>0$, there exist constants $c,m,d_0>0$ such that for all $x > 0$, all $d\ge d_0$, one has
\begin{equation}
\label{exponential_control_lambda_eq}
\dP\left( \cW_d(\cT_{d},\cT'_d)\ge \gamma^{d-d_0} m x \big| S_{\rho,d} \right) \leq e^{-(x-c)^+}.
\end{equation}
Define the random variables
$$X'_d := \gamma^{-(d-d_0)} m^{-1} \cW_d\left(\cT_{d+\delta},\cT'_d\right),$$ conditional on $S_{\rho,d}$. The proof will again be by induction on $d$, the initial step being established with the same argument as in the proof of Theorem \ref{lambda_close_to_1}. Note that this argument does not depend on $\delta$. 

Denote by $D$ the number of children of $\rho$ in $\cT_{d}$, $D'$ the number of children of $\rho'$ in $\cT'_{d}$ that are in the intersection tree $\cT_d \cap \cT'_{d}$, and $D''$ the number of children of $\rho'$ in $\cT'_{d} \setminus \cT_d$. By branching property, note that these three variables are independent.

Recall that $p_{d}$ denotes the probability that a Galton-Watson tree with offspring $\Poi(\lambda)$ becomes extinct before $d$ generations. Then, conditionally on $S_{\rho,d}$, the random variables $D, D'$ and $D''$ have the following distributions:
\begin{flalign*}
D & \sim 1 + \Poi\left(\lambda\left(1-p_{d-1}\right)\right),\\
D' & \sim \Poi\left(\lambda s \left(1-p_{d-1}\right)\right)\\
D'' & \sim \Poi\left(\lambda(1-s)\left(1-p_{d-1}\right)\right), \mbox{ conditional on } D'+D'' >0.
\end{flalign*} We show an illustration on Figure \ref{image_exemple_D_D'_D''}.
\begin{figure}[H]
\centering
\includegraphics[scale=0.9]{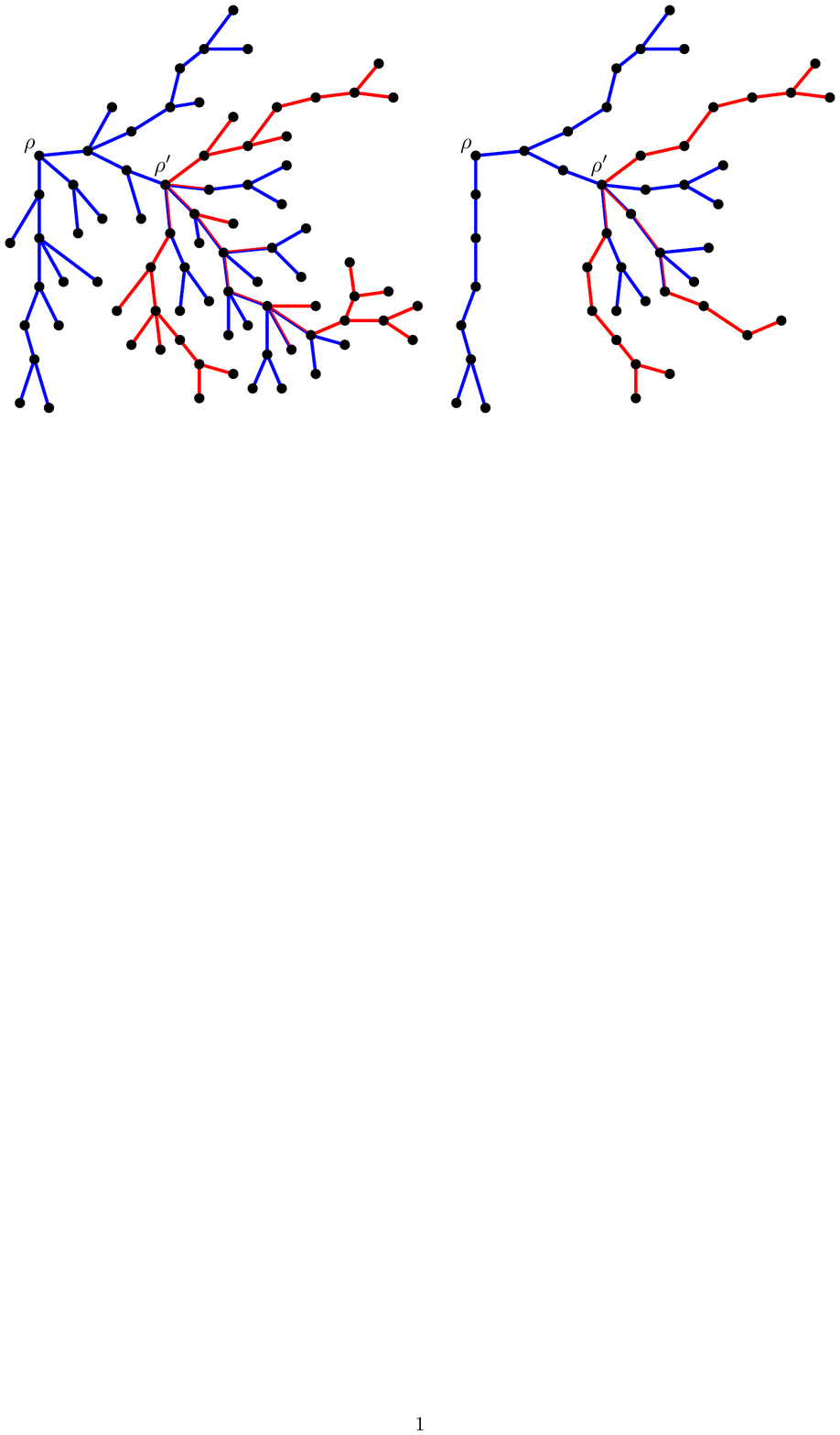}
\caption{\label{image_exemple_D_D'_D''}Random trees $\cT$ (blue) and $\cT'$ (red) from Figure \ref{image_GW} (left), and the results $\cT_d$  and $\cT'_d$ after applying $r_d$ (right). In this example, $\delta=3$ and $d=6$, $D=2$, $D' = 2$ and $D'' = 1$.}
\end{figure}
We condition on the values $\ell,k',k''$ taken by $D,D',D''$. The number of maximal one-to-one mappings between the children of $\rho$ in $\cT_d$ and those of $\rho'$ in $\cT'_d$ is given by $[(k'+k'')\vee \ell]^{\underline{(k'+k'')\wedge \ell}}$, and each of them is of size $\ell\wedge(k'+k'')$. Note here again that for a fixed matching between the children of $\rho$ and $\rho'$, the weights of the matched sub-trees are independent. We distinguish between several cases (to help understand these cases, the reader could keep figure \ref{image_exemple_D_D'_D''} in mind):
\comm{Ou mettre une nouvelle figure, reprenant l'exemple du haut ?}
\begin{itemize}
    \item For a child $i$ of $\rho$ that is not on the path to $\rho'$, the corresponding sub-trees are independent so that the corresponding weight is distributed as $\cW_{d-1}(\widetilde{\cT}_{d-1},\widetilde{\cT}'_{d-1})$ in the independent-tree model $GW(\lambda)$.
    \item If the child of $\rho$ on the path to $\rho'$ is matched with a child of $\rho'$ that is not in the intersection tree, again the corresponding weight is similarly distributed.
    \item Finally, if the child $i$ of $\rho$ leading to $\rho'$ is matched to a child $u$ of $\rho'$ in the intersection tree, setting the new root at $\widetilde{\rho}:=i$ in $\cT$ and at $\widetilde{\rho}':=u$ in $\cT'$, the corresponding weight has the same distribution as $\cW_{d-1}(\cT_{d-1},\cT'_{d-1})$ in the model $GW(\lambda,s,\delta)$, still conditioned to $S_{ \widetilde{\rho},d-1}$. Indeed, there is a path from $\widetilde{\rho}$ to $\widetilde{\rho}'$, and the corresponding Poisson distributions are conserved. 
\end{itemize} The induction hypothesis for case 3, together with Theorem \ref{lambda_close_to_1} for cases 1 and 2, therefore give us:

\begin{equation*}
\dP\left(X'_d\ge x\right) \le \sum_{k,\ell} \dP\left(D'+D''=k,D=\ell\right)\min\left(1, (k \vee \ell)^{\underline{k \wedge \ell}} \, \dP\left(\cE_1 + \ldots + \cE_{k \wedge \ell}>\gamma x-(k \wedge \ell)c\right)\right),
\end{equation*}where the $\cE_i$ are independent exponential random variables of parameter $1$. Assume, as in the proof of Theorem \ref{lambda_close_to_1}, that $d_0$ is chosen such that for all $d\ge d_0$, 
$$
p^{\lambda}_{d}=1-2\eps+O(\eps^2).
$$
With simple computations, we can then ensure that for some $\kappa>0$, noting $q_{d,\cdot}$ the distribution of $D$, one has
$$
q_{d,1} \le 1-\eps+\kappa \eps^2,\quad q_{d,k} \le \frac{(3\eps)^{k-1}}{(k-1)!} \le \frac{(6\eps)^{k-1}}{k!},\; k \geq 2,
$$ where we used $k \leq 2^{k-1}$ in the last step. By independence of $D'$ and $D''$, $D'+D''$ follows a $\Poi(\lambda (1-p_{d-1}))$ distribution, conditional on being positive. Noting $q'_{d,\cdot}$ this distribution, we have, as in the previous proof,
$$
q'_{d,1} \le 1-\eps+\kappa \eps^2,\quad q'_{d,k} \le \frac{(3\eps)^{k-1}}{k!},\; k \geq 2.
$$
We can then invoke Lemma \ref{lemma_q_control} to conclude.
Note that every control in the proof is made uniformly on $\delta \geq 1$ \comm{Et c'est même indépendent de $s$}.
\end{proof}

\subsection{Proof of lemma \ref{lemma_q_control}}
\begin{proof}[Proof of Lemma \ref{lemma_q_control}]. Let
\begin{flalign*}
S_1 & := e^{x-c} q_1^2 e^{-(\gamma x -c)_+} +4 q_1 q_2 e^{-(\gamma x-c)_+},\\
S_2 & := 2 e^{x-c} q_1 \sum_{\ell\ge 3} q_\ell \min\left( 1, \ell e^{-(\gamma x -c)_+}\right),\\
S_3 & := 2 e^{x-c} \sum_{2\le k\le \ell} q_k q_{\ell} \min\left(1,\ell^{\underline{k}} \, \dP\left(\cE_1 + \ldots + \cE_k > \gamma x-kc\right)\right).
\end{flalign*} Our goal is to show that for a suitable choice of $c$, for all $x>c$, $S_1+S_2+S_3\le 1$. One has
\begin{equation}
\label{s1_eq}
S_1 \leq e^{-r \eps x}\left((1-\eps+\kappa \eps^2)^2+2 C \eps \right) \leq e^{-r\eps x}(1+2 C \eps),
\end{equation} and
\begin{equation}
\label{s2_eq}
S_2\le 2 e^{-r\eps  x}(1-\eps +\kappa \eps^2)\sum_{\ell \ge 3}\frac{(C\eps)^{\ell-1}}{(\ell-1)!} \le 2 e^{-r\eps x}\left(e^{C\eps}-1-(C\eps) \right)\le 2 e^{-r\eps x} C^2\eps^2.
\end{equation} We let $k_0$ be such that $\gamma x \in[ k_0c,(k_0+1)c)$. We then upper-bound $S_3$ by $A+B$
where
\begin{flalign}
A & = 2 e^{x-c} \sum_{k=2}^{k_0} q_k \sum_{\ell \ge k} q_{\ell} \frac{\ell!}{(\ell-k)!} \, \dP\left(\cE_1 + \ldots + \cE_k > \gamma x - kc\right), \\
B & = 2 e^{x-c} \sum_{k \ge (k_0+1) \vee 2} \sum_{\ell\ge k} q_k q_{\ell}.
\end{flalign} One readily has
\begin{flalign}
\label{s3_B_eq}
B & \le 2 e^{-r\eps  x} e^{\gamma x-c} \sum_{k\ge (k_0+1)\vee 2} \frac{(C\eps)^{k-1}}{k!} \sum_{\ell\ge k} \frac{(C\eps)^{\ell-1}}{\ell!}\\
&\le 2 e^{-r\eps x} e^{\gamma x -c}\sum_{k\ge (k_0+1)\vee 2}\frac{(C\eps)^{2(k-1)}}{k!}\\
&\le 2 e^{-r\eps x} e^{k_0 c} (C\eps)^{2\left((k_0+1)\vee 2 -1\right)}\\
&\le 2 e^{-r\eps x} \left( C\eps e^{c} \right)^2,
\end{flalign} where in the last steps we assumed that $C\eps e^{c}<1$, so that $$e^{k_0 c} (C\eps)^{2\left((k_0+1)\vee 2 -1\right)} \le (C\eps e^c)^{2\left((k_0+1)\vee 2 -1\right)} \le \left(C\eps e^{c}\right)^2.$$

Note that for $y\ge 0$, $\dP\left(\cE_1 + \ldots + \cE_k > y\right)=\dP(\Poi(y)<k)=e^{-y} \sum_{j=0}^{k-1}y^j/j!$. Write then
\begin{flalign}
\label{s3_A_eq}
\begin{array}{ll}
A & \le 2 e^{x-c} \sum_{k=2}^{k_0} \frac{(C\eps)^{2(k-1)}}{k!} \sum_{j=0}^{k-1} e^{-\gamma x+k c} \frac{(\gamma x)^j}{j!}\\
&\le 2 e^{-r\eps x} \sum_{k=2}^{k_0} \frac{(C^2 e^c)^{k}}{k!} \sum_{j=0}^{k-1}\frac{\left(\gamma x\eps^2\right)^j}{j!}\eps^{2(k-1-j)}\\
&\le 2 e^{-r\eps x}\sum_{k=2}^{k_0}\frac{(C^2 e^c)^{k}}{k!}\left[\eps^{2(k-1)}+e^{\gamma x\eps^2}-1\right]\\
&\le 2 e^{-r\eps x}\left[\eps^{-2}\left(e^{\eps^2 C^2e^c}-1-\eps^2 C^2 e^c \right)  +  e^{C^2 e^c}\left( e^{\gamma x\eps^2}-1  \right)\right]
\end{array}
\end{flalign}
Summing the upper bounds (\ref{s1_eq})-(\ref{s3_A_eq}), the desired property will then hold if for all $x>c$, one has:
\begin{align}
\label{final_eq}
\begin{split}
& e^{-r\eps x}\left[1+2C\eps+ 2C^2\eps^2+2 [C\eps e^{c}]^2+2\eps^{-2}\left(e^{\eps^2 C^2e^c}-1-\eps^2 C^2 e^c \right)\right]\\ & + 2 e^{-r\eps x} e^{C^2 e^c}\left( e^{\gamma x\eps^2}-1  \right)\le 1.
\end{split}
\end{align}

The first term is, for any fixed $c$, and for sufficiently small $\eps$, upper bounded $e^{-r\eps x}\left(1+(2C+1) \eps\right)$.

We now distinguish three cases for $x$.
\begin{itemize}
\item Case 1: $x\in [c,1/\sqrt{\eps}]$. The second term is then $O(\eps\sqrt{\eps})$. Provided $r c>2C+1$, since $e^{-r\eps x}\le e^{-r\eps c}=1-r\eps c+O(\eps^2)$, the left-hand side of (\ref{final_eq}) is then upper-bounded by $1-(rc-2C-1)\eps+O(\eps \sqrt{\eps})$, and is thus less than $1$.
\item Case 2: $x\in [1/\sqrt{\eps},1/\eps]$. Since $e^{-r\eps x}\le e^{-r\sqrt{\eps}}=1-\Omega(\sqrt{\eps})$, and $e^{\gamma x\eps^2}-1\le e^{\gamma \eps}-1=O(\eps)$, the left-hand side of (\ref{final_eq}) is upper-bounded by $1-\Omega(\sqrt{\eps})$ and is thus less than 1.
\item Case 3: $x\ge 1/\eps$. 
The first term is then bounded by $e^{-r}(1+ (2C+1)\eps)$, which is less than $1-\Omega(1)$ for $\eps$ small enough. Letting $y=\eps x$, the second term reads
$$
e^{-ry}[e^{\eps \gamma y}-1](2 e^{C^2 e^c}).
$$
For small $\eps$, this function is maximized for $y=1/r +O(\eps)$, at which point it evaluates to $O(\eps)$. Thus the left-hand side of (\ref{final_eq}) is upper-bounded by $1-\Omega(1)$ in that range.
\end{itemize} 
We have thus shown that for any $r>0$, provided $c> (2C+1)/r$, then for all sufficiently small $\eps$, the desired property holds with $\gamma=1+r\eps$.
\end{proof} 

\subsection{Proof of lemma \ref{lemma_pruning_trees}}
\begin{proof}[Proof of Lemma \ref{lemma_pruning_trees}]
For a tree $t$, we write $t=(t^1,\ldots,t^k)$ to represent the fact that its root has $k$ children, whose offsprings are given by trees $t^1,\ldots,t^k$. Write, noting by $D$ the number of children of $\rho(\cT)$, fixing $k \ge 1$, $t^1,\ldots, t^k\in\cA_{d-1}$, and letting $S=i_1<\cdots<i_k$ run over all $k$ subsets of $[\ell]$:
\begin{flalign*}
\dP(\cT_d=(t^1,\ldots, t^k))&=\sum_{\ell \ge 0}\dP(\cT_d=(t^1,\ldots,t^k),D=\ell)\\
&=\sum_{\ell \ge k}\sum_{S}\dP\left(D=\ell, r_d(\cT^{i_j})=t^j,j\in[k], r_d(\cT^v)=\varnothing, v\notin S\bigg |\overline{\cE}_d\right)\\
&=\frac{1}{1- p_{d}}\sum_{\ell \ge k}\binom{\ell}{k}e^{-\lambda}\frac{\lambda^{\ell}}{\ell!}p_{d-1}^{\ell-k}\prod_{j=1}^k\dP(\cT_{d-1}=t^j) (1-p_{d-1})\\
&=\frac{1}{1- p_{d}}\frac{(\lambda(1-p_{d-1}))^k}{k!}\prod_{j=1}^k\dP(\cT_{d-1}=t^j)\sum_{\ell\ge k}e^{-\lambda}\frac{(\lambda p_{d-1})^{\ell -k}}{(\ell-k)!}\\
&=\frac{e^{-\lambda(1- p_{d-1})}}{1- p_{d}}\frac{(\lambda(1-p_{d-1}))^k}{k!}\prod_{j=1}^k\dP(\cT_{d-1}=t^j)
\end{flalign*}
The conclusion follows by noting that $1-p_{d}=1-e^{-\lambda(1-p_{d-1})}$.
\end{proof}

\section{\label{appendix_proof_lemmas_sec_2}Detailed proofs for Section \ref{section_sparse_graph_alignment}}
The following  proofs are adapted from the previous work of \cite{Massoulie13} and \cite{Bordenave15}.
\subsection{Proof of Lemma \ref{control_S}}
\begin{proof}[Proof of Lemma \ref{control_S}]
	Fix $K>0$ to be specified later and $\gamma>0$. Fix $i \in [n]$, and define 
	\begin{equation*}
	T := \inf \left\lbrace t \leq d, \left|\cS_{G}(i,t)\right| \geq K \log n \right\rbrace.
	\end{equation*} If $T=\infty$, there is nothing to prove. Given $\left| \cS_{G}(i,T-1) \right|$, $$\left| \cS_{G}(i,T) \right| \sim \mathrm{Bin}\left(n-\left| \cS_{G}(i,0) \right|-\ldots-\left| \cS_{G}(i,T-1) \right|,1-\left(1-\frac{\lambda}{n}\right)^{\left| \cS_{G}(i,T-1) \right|}\right).$$ Thus
	$$ \left| \cS_{G}(i,T) \right|  \overset{\mathrm{sto.}}{\leq} \mathrm{Bin}\left(n, \lambda K \frac{\log n}{n}\right).$$
	Using Bennett's inequality, for $K'>\lambda K$:
	\begin{equation*}
	\mathbb{P} \left(\left| \cS_{G}(i,T) \right|  \geq K' \log n \right) \leq e^{- \lambda K h\left(\frac{K'-\lambda K}{\lambda K}\right)\log n },
	\end{equation*} with $h(u)= (1+u) \log(1+u) -u$. This probability is $\leq n^{-2-\gamma}$ if $K'$ is large enough to verify $\lambda K h\left(\frac{K'-\lambda K}{\lambda K}\right) > \gamma +2$. With a simple use of the union bound, one gets that $\left|\cS_{G}(i,T) \right|  \in \left[K \log n, K' \log n\right]$ for all $i \in [n]$ with probability $1-O(n^{-1-\gamma})$.\\
	
	Take $\eps > 0$ to be specified later. We then check by induction that with high probability, for all $T \leq t \leq d$,
	\begin{equation}
	\label{controle_produit_S}
	\left| \cS_{G}(i,t) \right|  \in \left[ K\left(\frac{\lambda}{2}\right)^{t-T}  \left(\log n\right) \prod_{s=T}^{t}\left(1-\eps \left(\frac{\lambda}{2}\right)^{-\frac{s-T}{2}}\right) , K' \lambda^{t-T}  \left(\log n\right) \prod_{s=T}^{t}\left(1+\eps \lambda^{-\frac{s-T}{2}}\right)\right].
	\end{equation}The case $t=T$ is proved here above. We will next use the inequality
	\begin{equation}
	\label{ineq_lemma}
	\lambda u /(2n) \leq \lambda u /n - \lambda^2 u^2 /(2n^2) \leq 1-\left(1-\lambda/n\right)^u \leq \lambda u /n.
	\end{equation} that holds as soon as $\lambda u /n<1$.\\
	
	Assuming (\ref{controle_produit_S}) holds up to $t$, inequality (\ref{ineq_lemma}) holds for $u = \left| \cS_{G}(i,t) \right| $ for $n$ large enough, since $\left| \cS_{G}(i,t) \right|  < n/\lambda$ for $c \log \lambda <1$. Thus for $n$ large enough $\mathbb{E}\left| \cS_{G}(i,t+1) \right| $ lies in the interval 
	\begin{equation*}
	\left[\frac{K}{2} \lambda  \left(\frac{\lambda}{2}\right)^{t-T}  \left(\log n\right) \underbrace{\prod_{s=T}^{t}\left(1-\eps \left(\frac{\lambda}{2}\right)^{-\frac{s-T}{2}}\right)}_{=1-O(\eps)} , \lambda K' \lambda^{t-T}  \left(\log n\right) \prod_{s=T}^{t}\left(1+\eps \lambda^{-\frac{s-T}{2}}\right) \right]
	\end{equation*}
	
	With $\hat{\eps} > 0$ to be specified later, Bennett's inequality writes 
	\begin{equation*}
	\mathbb{P} \left(\bigg|\left| \cS_{G}(i,t+1) \right| - \mathbb{E}\left| \cS_{G}(i,t+1) \right|\bigg|\geq \hat{\eps} \mathbb{E}\left| \cS_{G}(i,t+1) \right|\right) \leq 2 e^{- \frac{1}{2} \lambda  \left(\frac{\lambda}{2}\right)^{t-T}  \log n \left(1-O\left(\eps\right) \right) h\left(\hat{\eps}\right) },
	\end{equation*}
	which is $\leq n^{-2-\gamma}$ if $K \left(\frac{\lambda}{2}\right)^{t+1-T} h(\hat{\eps})>2+\gamma$. Since for $u \to 0$, $h(u)=u^2 / 2 + o(u^2)$, it suffices to take $\hat{\eps}=\eps \left(\frac{\lambda}{2}\right)^{-\frac{t+1-T}{2}}$ with $\eps$ small enough and $K$ large enough such that $K \eps >2+\gamma$. Thus (\ref{controle_produit_S}) holds for $t+1$ with probability $1-O(n^{-2-\gamma})$. \\
	
	All this ensures that the desired inequality (\ref{control_S_eq}) holds for all $i \in [n]$, $t \in [d]$ with probability $1-O(n^{-\gamma})$.
\end{proof}

\subsection{Proof of Lemma \ref{cycles_ER}}
\begin{proof}[Proof of Lemma \ref{cycles_ER}]
Fix $i \in [n]$. Define
$$k^* := \inf \lbrace t \leq d, \; \cB_G(i,t) \mbox{ contains a cycle}\rbrace .$$
Note that $k^* \geq 2$, and that if $k^*=\infty$ then $\cB_G(i,d)$ does not contain any cycle. Now assume that $k^*<\infty$. For any $k \geq 2$, $k^*=k$ if and only if there are two vertices of $\cS_G(i,k-1)$ that are connected, or if there is a vertex of $\cS_G(i,k)$ connected to two vertices of $\cS_G(i,k-1)$. On the event 
$$\mathcal{A}:=\underset{t \leq d}{\bigcap} \left\lbrace \left|\cS_G(i,t)\right|< C (\log n) \lambda^t \right\rbrace, $$
this happens with probability at most $$ \left|\cS_G(i,k-1) \right|^2 \times \frac{\lambda}{n}  + \left|\cS_G(i,k) \right| \times \left|\cS_G(i,k-1) \right|^2 \times \frac{\lambda^2}{n^2} \leq C^2 \frac{(\log n)^2 \lambda^{2k}}{n} + C^3 \frac{(\log n)^3 \lambda^{3k}}{n^2}.$$

Taking $\eps>0$ such that $c \log \lambda \leq 1/2-\eps$, choosing $C$ such that $\mathbb{P}\left(\cA \right)=1-O\left( n^{-2 \eps} \right)$ with Lemma \ref{control_S}, the probability that $\cB_G(i,d)$ contains a cycle is less than
\begin{flalign*}
\mathbb{P}\left( k^* < \infty \right) & \leq  \mathbb{P}\left( \bar{\cA} \right) +  \sum_{k = 2}^{d} \mathbb{P}\left( k^* =k \, | \,  \cA \right)\\
& \leq  O\left( n^{-2 \eps} \right) + O\left( \frac{(\log n)^2 \lambda^{2d}}{n} \right) + O\left( \frac{(\log n)^3 \lambda^{3d}}{n^2} \right) \\
& \leq  O\left( n^{-2 \eps} \right)+O\left( (\log n)^2 n^{-2 \eps} \right)+O\left( (\log n)^3 n^{-3 \eps} \right) \leq O(n^{-\eps}).
\end{flalign*}
\end{proof}

\subsection{Proof of Lemma \ref{indep_neighborhoods}}
\begin{proof}[Proof of lemma \ref{indep_neighborhoods}]
For fixed $i \neq j \in [n]$, let $\left(\tilde{\cS}(i,t)\right)_{t \leq d}$ and $\left(\tilde{\cS}(j,t)\right)_{t \leq d}$ denote two independent realizations of the neighborhoods (i.e. with independent underlying Bernoulli variables). We then construct recursively a coupling $\left(\cS(i,t),\cS(j,t)\right)_{t \leq k} $:
\begin{itemize}
    \item For $k=1$, take $\cS(i,t)$ to be a set of vertices uniformly chosen among sets of $[n]$ of size $\left| \tilde{\cS}(i,0)\right|$. Independently, take $\cS(j,t)$ to be a set of vertices uniformly chosen among sets of $[n]$ of size $\left| \tilde{\cS}(j,0)\right|$.
    
    \item Now if $k>1$, construct $\cS(i,k)$ as follows: select a subset of $[n] \setminus \left(\underset{s \leq k-1}{\bigcup} \cS(i,s)\right)$ of size $ \left|\tilde{\cS}(i,k)\right| $ uniformly at random. Then we construct independently $\cS(j,k)$ taking a uniform subset of $[n] \setminus \left(\underset{s \leq k-1}{\bigcup} \cS(j,s)\right)$ of size $ \left|\tilde{\cS}(j,k)\right| $.
\end{itemize}

This coupling is well defined, and coincides with the independent setting up to step $k$ as long as the sets $\underset{s \leq k}{\bigcup} \cS(i,s)$ and $\underset{s \leq k}{\bigcup} \cS(j,s)$ do not intersect. On the event $$\mathcal{A}:=\underset{t \leq d}{\bigcap} \left\lbrace \left|\cS(i,t)\right|,\left|\cS(j,t)\right| < C (\log n) \lambda^t \right\rbrace, $$ one has
\begin{flalign*}
\mathbb{E}\left[\left|\underset{k \leq d}{\bigcup} \cS(i,s) \cap \underset{k \leq d}{\bigcup} \cS(j,s)\right|\right] & \leq 
\mathbb{E}\left[\sum_{k=1}^{d} \mathrm{Bin}\left(C (\log n) \lambda^k, \frac{\sum_{t=1}^{k} C (\log n) \lambda^t}{n-\sum_{t=1}^{k} C (\log n) \lambda^t} \right) \right] \\
& \leq C^2 (\log n)^2 \left(\frac{\lambda}{\lambda-1}\right) \sum_{k=1}^{d} \frac{\lambda^{2k} }{n-\frac{\lambda}{\lambda-1} C (\log n)\lambda^k}\\
& \leq O \left((\log n)^2 \lambda^{2d}/n\right)
\end{flalign*}if $(\log n) \lambda^d = o(n)$, which is the case if $c \log \lambda <1$. The expectation is upper-bounded by $O \left((\log n)^2 \lambda^{2d}/n\right) = O\left((\log n)^2 n^{-2\eps}\right)$ if $c \log \lambda \leq 1/2 -  \eps$.\\

With Lemma \ref{control_S}, choosing $C$ such that $\mathbb{P}\left(\cA \right)=1-O\left( n^{-2 \eps} \right)$, we get
\begin{flalign*}
\hspace{-0.5cm}
\DTV\left(\mathcal{L} \left(\left(\cS_{G}(i,t),\cS_{G}(j,t)\right)_{t \leq d}\right),\mathcal{L} \left(\left(\cS_{G}(i,t)\right)_{t \leq d}\right) \otimes \mathcal{L} \left(\left(\cS_{G}(j,t)\right)_{t \leq d}\right)\right) & \leq O((\log n)^2 n^{-2 \eps}) + \mathbb{P}\left(\bar{\mathcal{A}}\right)\\
& \leq O(n^{- \eps}).
\end{flalign*}
\end{proof}

\subsection{Proof of Lemma \ref{coupling_GW}}
We work here conditionally on
$$\mathcal{A}:=\underset{t \leq d}{\bigcap} \left\lbrace \left|\cS_G(i,t)\right|< C (\log n) \lambda^t \right\rbrace.$$
Let's define a Galton-Watson process as follows: set $Z_0=1$, and for $t > 0$, $\cL\left( Z_{t} | \cG_{t-1} \right)=\Poi \left( \lambda Z_{t-1}\right)$, where $\cG_t=\sigma\left(Z_s,s \leq t \right)$. 
Fix $t>0$. Conditionally on $\cF_{t-1}:=\sigma\left(\left|\cS_G(i,s)\right|, s \leq t-1 \right)$, define a random variable $W_t$ with distribution $\Poi \left( \lambda \left|\cS_G(i,t-1)\right|\right)$. Note that
$$\cL\left( \left|\cS_G(i,t)\right| \bigg| \cF_{t-1} \right) = \Bin \left( n-\left|\cS_G(i,0)\right|-\ldots-\left|\cS_G(i,t-1)\right|, \; 1-\left(1-\frac{\lambda}{n} \right)^{\left|\cS_G(i,t-1)\right|} \right).$$
The Stein-Chen method (see e.g. \cite{Barbour05}) enables to bound  $\DTV \left(\Bin(n,\lambda/n),\Poi(\lambda)\right)$ by $\min(1,\lambda^{-1})\lambda^2/n \leq \lambda/n$. We also use the classical bound $\DTV \left(\Poi(\lambda),\Poi(\lambda')\right)\leq \left| \lambda-\lambda' \right|$ together with inequality (\ref{ineq_lemma}) (which holds for $n$ large enough since $c \log \lambda <1)$ to obtain that conditionally on $\cF_{t-1}$:
\begin{flalign*}
\DTV\left(\left|\cS_G(i,t)\right|,W_t\right) & \leq n^{-1}  \left(n-\left|\cS_G(i,0)\right|-\ldots-\left|\cS_G(i,t-1)\right|\right) \frac{\lambda \left|\cS_G(i,t-1) \right|}{n} \\
&+ \left| \left(n-\left|\cS_G(i,0)\right|-\ldots-\left|\cS_G(i,t-1)\right|\right) \left( 1-\left(1-\frac{\lambda}{n} \right)^{\left|\cS_G(i,t-1)\right|} \right) - \lambda \left|\cS_G(i,t-1) \right| \right| \\
& \leq \frac{\lambda \left|\cS_G(i,t-1) \right|}{n} + \lambda \left|\cS_G(i,t-1) \right|\\&- \left(n-\left|\cS_G(i,0)\right|-\ldots-\left|\cS_G(i,t-1)\right|\right) \frac{\lambda \left|\cS_G(i,t-1) \right|}{n}  + \frac{\lambda^2 \left|\cS_G(i,t-1) \right|^2}{2n}.
\end{flalign*}
Now, for $\eps>0$ such that $c \log \lambda \leq 1/2 - \eps$, on the event $\cA$, all variables $\left|\cS_G(i,s) \right|$ are bounded by $C (\log n) n^{1/2 - \eps}$. This leads to
\begin{flalign*}
\DTV\left(\left|\cS_G(i,t)\right|,W_t\right) & \leq O\left( (\log n)n^{-1/2-\eps} \right) + O\left( (\log n)^3 n^{-2\eps} \right) + O\left( (\log n)^2 n^{-2\eps} \right) \\& = O\left( (\log n)^3 n^{-2\eps} \right).
\end{flalign*}
This proves by induction that the total variation distance between $\left( \left|\cS_G(i,t)\right| \right)_{t \leq d}$ and $\left( Z_t \right)_{t \leq d}$ is bounded by $O\left( (\log n)^4 n^{-2\eps} \right) = O\left(n^{-\eps}\right)$, taking $C$ large enough in Lemma \ref{control_S} so that $\mathbb{P}\left(\cA \right) \geq 1 - O\left(n^{-2\eps}\right)$.

\subsection{Proof of Theorem \ref{no_mismatchs}}

\begin{proof}[Proof of Theorem \ref{no_mismatchs}]\label{app:th2.2}
Define
\begin{equation*}
d_{\mathrm{max}}:=\max\left(\max_i \mathrm{deg}_{G_1}(i), \max_u \mathrm{deg}_{G_2}(u)\right).
\end{equation*} 
We use the same notations as in the former proof: $G_{\cup} = G_1 \cup G_2$ and $G_{\cap} = G_1 \cap G_2$. Fix $i \in [n]$. In the rest of the proof we work conditionally to the event $C_{\cup,i,2d}$ that $\mathcal{B}_{G_{\cup}}(i,2d)$ has no cycle. Since $c \log \lambda<1/4$, $\dP\left( C_{\cup,i,2d} \right)=1-o(1)$ by Lemma \ref{cycles_ER}. \\
Fix another vertex $u \neq i$. The $d-$neighborhoods $\cB_{G_1}(i,d)$ and $\cB_{G_2}(u,d)$ have offspring distribution stochastically dominated by $\mathrm{Bin}(n,\lambda/n)$, which is also dominated by $\mathrm{Poi}(\lambda')$ as soon as $\lambda'=\lambda+O(1/n)$ (see e.g. \cite{Klencke09}). We can choose $\lambda'$ such that $\gamma > \gamma(\lambda',0)$ still holds: indeed, by a standard coupling argument, one can see that $\gamma : \lambda \mapsto \gamma(\lambda)$ is increasing. We now build two dominating (in the usual edge presence sense) tree-like $d-$neighborhoods of $i$ and $u$ with the following construction. 
\begin{itemize}
    \item First, if the two neighborhoods don't intersect, just sample two independent trees from model $GW(\lambda')$ rooted in $i$ and in $u$. 
    \item If the two neighborhoods intersect, condition to the event that $\alpha$ is the contact point in the path $\mathfrak{p}_{\cup}$ (unique by conditioning on $C_{\cup,i,2d}$) from $i$ to $u$ in the joint graph. Then there is a path of edges of $G_1$ (say, blue) from $i$ to $\alpha$, then a path of edges of $G_2$ (say, red) from $\alpha$ to $u$. Next, complete this construction: along $\mathfrak{p}_{\cup}$, propagate the blue path from $\alpha$ towards $u$ with probability $s$ on each edge, stopping at the first time when one red edge is not selected. Do the symmetrical construction to propagate the red path from $\alpha$ towards $i$. Finally, to each double-colored vertex, attach independent realizations of model $GW(\lambda',s)$, and to each single-colored vertex, attach independent realizations of model $GW(\lambda')$.
\end{itemize}
Note that these constructions lead to at most one path $\mathfrak{p}_{\cup}$ between $i$ and $u$ in $\cB_{G_1}(i,d) \cup \cB_{G_2}(u,d)$, so a fortiori in $\cB_{G_1}(i,d) \cap \cB_{G_2}(u,d)$. Denote by  $\mathfrak{p}_{\cap}$ this hypothetical path (cf. figure \ref{fig_parrallel_construction_bis}). We then distinguish between several cases.
\begin{figure}[H]
\centering
\includegraphics[scale=0.85]{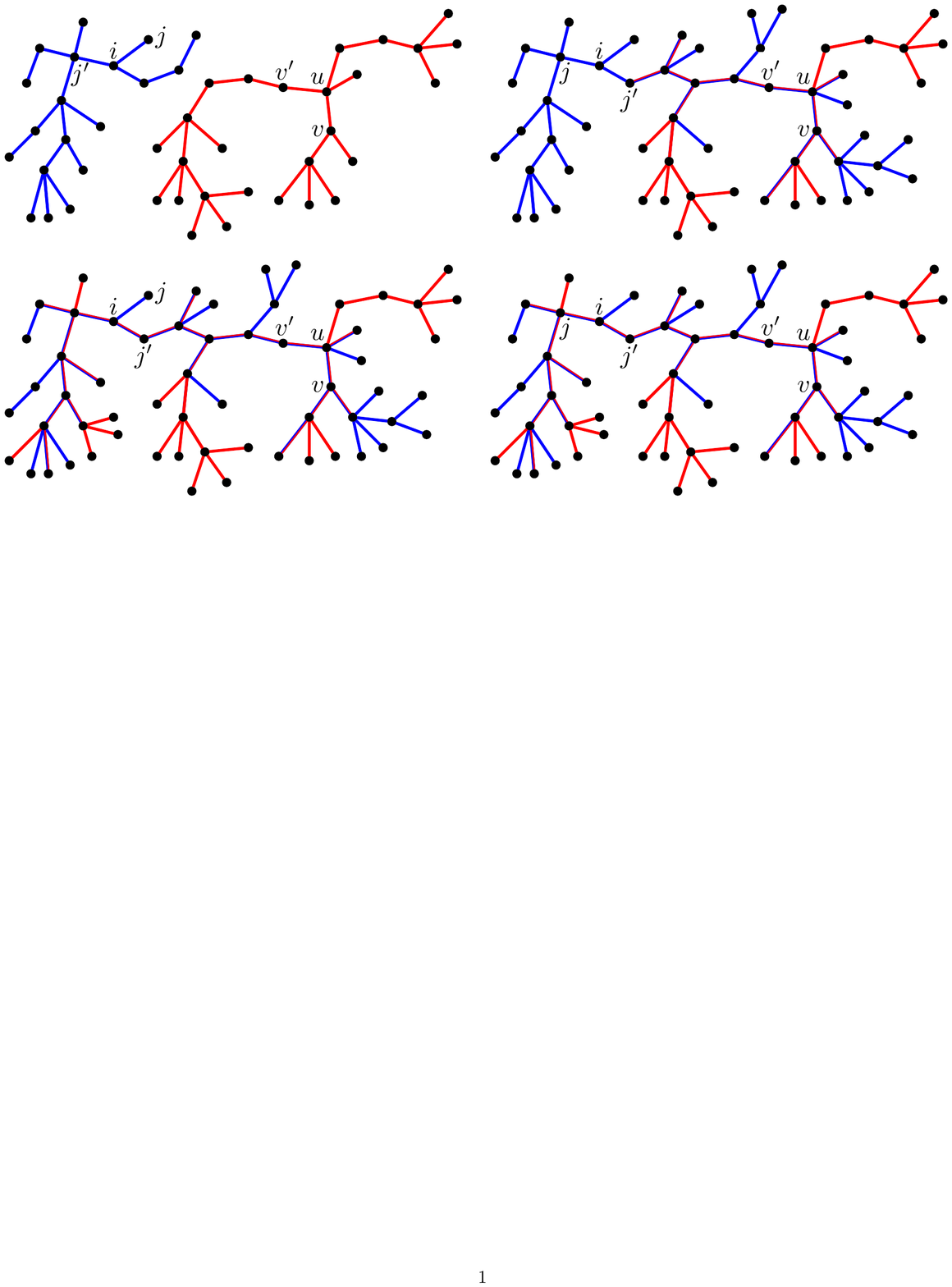}
\caption{\label{fig_parrallel_construction_bis} Possible realizations of $\cB_{G_1}(i,d)$ (blue) and $\cB_{G_2}(u,d)$ (red), with distinct cases $(i)$ (top left), $(ii)$ (top right), $(iii.a)$ (bottom left) and $(iii.b)$ (bottom right).}
\end{figure}\underline{Case $(i)$}: $\delta_{G_{\cup}}(i,u)>2d$ (figure \ref{fig_parrallel_construction_bis}, top left), i.e. $\cB_{G_1}(i,d)\cap\cB_{G_2}(u,d)=\varnothing$. The  construction gives a coupling with two independent trees from model $GW(\lambda)$. By assumption $\gamma(\lambda)<\lambda s$, the probability that there exist $ j$ in $\mathcal{N}_{G_1}(i)$ and $ v$ in $\mathcal{N}_{G_2}(u)$ such that $\mathcal{W}_{d-1}(j \leftarrow i, v \leftarrow u)>\gamma^{d-1}$ is upper bounded by $O\left(d_{\mathrm{max}}^2 \exp\left(-n^{\eps}\right)\right)$, following Remark \ref{rem:1.5}. Hence $i$ is matched to $u$ with at most this probability.\\

\underline{Case $(ii)$}: $\delta_{G_{\cup}}(i,u) \leq 2d$ but $\mathfrak{p}_{\cap}$ does not exist (see figure \ref{fig_parrallel_construction_bis}, top right). Take $v \neq v'$ two neighbors of $u$ and $j \neq j'$ two neighbors of $i$. Then (at least) one of these vertices is not on $\mathfrak{p}_{\cup}$ (e.g. vertex $j$ on figure \ref{fig_parrallel_construction_bis}): the downstream tree from this vertex is independent from every other neighborhood in the other graph. They can be coupled with model $GW(\lambda)$, and the same bound as in case $(i)$ holds.\\

Now assume that $\mathfrak{p}_{\cap}$ exists, and let $v \neq v'$ two neighbors of $u$ and $j \neq j'$ two neighbors of $i$. \underline{Case $(iii.a)$}: At least one of the edges $(i,j),(i,j'),(u,v),(u,v')$  is not in $G_{\cap}$ (e.g. edge $(i,j)$ on figure \ref{fig_parrallel_construction_bis}, bottom left): again, the same argument applies. \underline{Case $(iii.b)$}:  Edges $(i,j),(i,j'),(u,v),(u,v')$  are all in $G_{\cap}$ (see figure \ref{fig_parrallel_construction_bis}, bottom right). Then one pair of vertices (say $(j',v')$ as on figure \ref{fig_parrallel_construction_bis}) can be on $\mathfrak{p}_{\cap}$ and bring a high $\cW_{d-1}(j' \leftarrow i, v' \leftarrow u)>\gamma^{d-1}$ matching weight, if their descendants  spread over a great part of the intersection. In that case,  since $j$ and $v$ can't be on $\mathfrak{p}_{\cap}$, the associated downstream trees are independent, and again $\cW_{d-1}(j \leftarrow i, v \leftarrow u)<\gamma^{d-1}$ with high probability. \\
The remaining case to be considered is that of matches $(j,v')$ and $(j',v)$, with $j',v'$ on $\mathfrak{p}_{\cap}$. All trees involved are then correlated. However, the coupling construction induces a coupling of the two pairs of $(d-1)-$neighborhoods (from $(j,v')$ and from $(j',v)$, see figure \ref{fig_parrallel_construction_bis}) with two pairs of trees from model $GW(\lambda', s, \delta)$ where $\delta = \left|\mathfrak{p}_{\cap}\right|$. The Theorem assumes  $\gamma(\lambda,s,\delta)<\lambda s$ so that, by Theorem \ref{lambda_close_to_1_delta}, the probability that $\mathcal{W}_{d-1}(j \leftarrow i, v' \leftarrow u)>\gamma^{d-1}$ and $\mathcal{W}_{d-1}(j' \leftarrow i, v \leftarrow u)>\gamma^{d-1}$ is upper bounded by $O\left(\exp\left(-n^{\eps}\right)\right)$.\\

Thus, for $i$ fixed, one has 
$$
\mathbb{P}\left(\exists u  \neq i, \; (i,u) \in \cS\right) \leq 1-\dP\left(C_{\cup,i,2d}\right)+ n \times \dP\left(C_{\cup,i,2d}\right) \times  d_{\mathrm{max}}^2 \times O\left(\exp\left(-n^{\eps}\right) \right) = o(1).
$$
The Theorem then follows by appealing to Markov's inequality. \end{proof}
\end{document}